\theoremstyle{definition}
\newtheorem{definition}{Definition}
\newtheorem{example}{Example} 
\newtheorem{theorem}{Theorem}
\newtheorem{proposition}[theorem]{Proposition}
\newtheorem{lemma}[theorem]{Lemma}
\newtheorem{corollary}[theorem]{Corollary}
\def\pa{\partial}
\begin{document}

\title{Solvability of a Lie algebra of vector fields\\ implies their integrability
by quadratures}

\author{
J.F. Cari\~nena$^{\dagger\,a)}$,
F. Falceto$^{\ddagger\,b)}$,
J. Grabowski$^{\diamond\,c)}$,\\ \\
${}^{\dagger}$
   {\it Departamento de F\'{\i}sica Te\'orica and IUMA, Facultad de Ciencias} \\
   {\it Universidad de Zaragoza, 50009 Zaragoza, Spain}  \\
${}^{\ddagger}$
   {\it Departamento de F\'{\i}sica Te\'orica and BIFI, Facultad de Ciencias} \\
   {\it Universidad de Zaragoza, 50009 Zaragoza, Spain}  \\
${}^\diamond$
  {\it Polish Academy of Sciences, Institute of Mathematics,   }
  \\{\it  \'Sniadeckich 8,  00-656 Warsaw, Poland }
 }
\date{\today}
\maketitle


\begin{abstract}
We present a substantial generalisation of a classical result by Lie on integrability by quadratures. Namely, we prove that all vector fields in a finite-dimensional 
transitive and solvable Lie algebra of vector fields on a manifold can be integrated by quadratures.
\end{abstract}

\begin{quote}

{\it  Short title: Solvability implies integrability}

{\it MSC Classification:}
{\enskip}34A26, {\enskip}37J15, {\enskip}37J35, {\enskip}70H06
\end{quote}

{\vfill}

\footnoterule
{\noindent\small
$^{a)}${\it E-mail address:} {jfc@unizar.es } \\
$^{b)}${\it E-mail address:} {falceto@unizar.es }\\
$^{c)}${\it E-mail address:} {jagrab@impan.pl }}
\newpage

 \tableofcontents
\section{Introduction}

The property of integrability of a given {autonomous} system of differential equations  has been receiving   quite a lot of attention and it has been an active field of research along the last years
 because of its applications  in many branches of science, and in particular in physics.
 The exact meaning of integrability {is stated clearly in each specific field only}.
Loosely speaking, integrability refers to the possibility of finding the general solution of the system.
One can be interested only in some kind of  solutions, for instance, polynomial or rational ones, or expressible in terms of elementary functions.

>From the geometric point of view, systems of differential equations appear as associated {with} a particular choice of coordinates to vector fields on a manifold $M$: solutions of the system provide local integral curves of the associated vector field {in these coordinates}. Therefore, integrability of a vector field means that you can find in an algorithmic way its flow.

The existence  of additional compatible geometric structures may play a relevant role and it allows us to introduce other concepts of integrability. {For instance,} the notion of integrability is often
identified as \emph{complete integrability}  or \emph{Arnold--Liouville integrability} \cite{arnold}, but we can also consider \emph{generalised Liouville
integrability} {\cite{MF78,Z04}} or even \emph{non-Hamiltonian integrability} \cite{MF78}.

Our aim  is to extend the  study of  integrability in the absence of additional compatible structures developed in \cite{CFGR15}, and more specifically the classical problem of integrability by quadratures, {by using} modern tools of algebra and geometry. This means to study  under what conditions  you can determine the solutions (i.e. the flow of a given vector field $\Gamma$) by means of a finite number of algebraic operations (including inversion of functions) and quadratures of some functions.

{ An important remark is that integrability by quadratures refers to a given system of coordinates. For instance, it is well known that for any vector field $\Gamma\in\mathfrak{X}(M)$ which does not vanish at $p\in M$ there is a neighbourhood $U$ of $p$ and coordinates $Q^1,\dots,Q^n$ on $U$ in which $\Gamma$ is \emph{rectified}, i.e. takes the form $\partial/\partial{Q^1}$. The problem is that finding
this coordinate system is equivalent to integrating $\Gamma$, something that sometimes cannot be done {by quadratures}, when we start from a `bad' coordinate system. On the other hand, when additional data are available (in our cases this will be a certain Lie algebra $L$ of vector fields on $M$ containing $\Gamma$), then the autonomous system of ordinary differential equations $\dot x=\Gamma(x)$ can be solved by quadratures independently on our initial system of coordinates. In this case we say that \emph{$\Gamma$ is integrable by quadratures}.

A classical example is the celebrated result due to Lie, who established the following theorem {\cite{MF78}}:

\begin{theorem}\label{t-Lie}
If $n$ vector fields, $X_1$,\ldots,$X_n$, which are linearly independent at each point of  an open set  $U\subset\mathbb{R}^n$, span a solvable Lie algebra and satisfy $[X_1,X_i]=\lambda_i\, X_1$ with $\lambda_i\in \mathbb{R} $,
then $X_1$ is integrable by quadratures in $U$.
\end{theorem}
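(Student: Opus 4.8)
The plan is to run Lie's classical reduction: replace the frame by its dual coframe, use solvability to split off one dimension at a time, and integrate $X_1$ by induction on $n$, each reduction step costing a single quadrature.

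First I would introduce on $U$ the coframe $\omega^1,\dots,\omega^n$ dual to the frame, $\langle\omega^i,X_j\rangle=\delta^i_j$. Writing $[X_i,X_j]=\sum_k c^k_{ij}\,X_k$ with constants $c^k_{ij}$ (constants because $X_1,\dots,X_n$ is a basis of the Lie algebra $L$), one has the structure equations $d\omega^k=-\sum_{i<j}c^k_{ij}\,\omega^i\wedge\omega^j$. The hypothesis $[X_1,X_i]=\lambda_iX_1$ means precisely that $\CR X_1$ is an ideal of $L$. For $n\ge2$ the solvable quotient $\bar L=L/\CR X_1$ is nonzero, hence has a codimension-one ideal, whose preimage is a codimension-one ideal $L'\subset L$ containing $X_1$; pick $X_n\in L\setminus L'$ and a basis $X_1,\dots,X_{n-1}$ of $L'$ retaining the given $X_1$ (this change of $X_2,\dots,X_n$ affects neither the statement to be proved nor the relations $[X_1,X_i]=\lambda_iX_1$). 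Since $L'$ is an ideal, $[X_i,X_j]\in L'$ for all $i,j$, so $c^n_{ij}=0$ and therefore $d\omega^n=0$.

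Now the induction on $n$. For $n=1$, $X_1=f\,\pa/\pa x$ with $f$ nowhere zero, and its flow comes from the quadrature $\int dx/f$ followed by an inversion. For $n>1$: by the Poincar\'e lemma and one quadrature on $U$ there is a function $y^n$ with $dy^n=\omega^n$, and $X_jy^n=\langle\omega^n,X_j\rangle=\delta^n_j$, so $y^n$ is a joint first integral of $X_1,\dots,X_{n-1}$; these fields are tangent to each level set $M'=\{y^n=c\}$, on which they form a frame spanning $L'$. As $L'$ is solvable, $\CR X_1$ is an ideal of $L'$, and $[X_1|_{M'},X_i|_{M'}]=\lambda_iX_1|_{M'}$, the inductive hypothesis yields the flow of $X_1|_{M'}$ by quadratures on $M'$. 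Together with the final, explicit integration — in appropriate quadrature coordinates $X_1=h(y^2,\dots,y^n)\,\pa/\pa y^1$, so that its orbits keep $y^2,\dots,y^n$ fixed and make $y^1$ affine in time — this reconstructs the flow of $X_1$ on $U$.

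The one genuinely delicate point is the return from the leaves to $U$: a priori the inductive integration furnishes one quadrature per leaf $M'_c$, whereas ``$X_1$ integrable by quadratures in $U$'' asks for \emph{finitely many} quadratures (primitives of $1$-forms) on $U$ itself. This is resolved by proving the stronger inductive statement, that the flow of $X_1$ is obtained from a fixed finite number of primitives of $1$-forms on $U$, of inversions, and of algebraic operations, and by noting that every ingredient — the forms $\omega^i$, the structure constants, and the auxiliary integrating factors — lives on all of $U$ and varies smoothly with every coordinate, so that ``a quadrature on each leaf'' is the restriction of a single quadrature on $U$. It is here that the \emph{constancy} of the $c^k_{ij}$ and the $\lambda_i$ enters: it keeps those integrating factors elementary (typically exponentials such as $e^{-\lambda_iy^i}$ or $e^{-c^{\,i}_{in}y^n}$ in already-constructed coordinates) rather than demanding extra quadratures. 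The remainder is routine manipulation of the structure equations.
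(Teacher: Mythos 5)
Your argument is essentially correct, but it is not the paper's route: the paper states Theorem \ref{t-Lie} as Lie's classical result without proof, and recovers it only as a special case of the main theorem of Section 5, whose proof runs in the opposite direction to yours. You reduce \emph{from the top}: the hypothesis $[X_1,X_i]=\lambda_i X_1$ makes $\langle X_1\rangle$ an ideal, solvability yields a codimension-one ideal $L'\ni X_1$, the dual coframe gives a closed form $\omega^n$, one quadrature gives a joint first integral, and you induct on the leaves of the resulting corank-one foliation until only $X_1$ is left. The paper reduces \emph{from the bottom}: it picks a one- or two-dimensional Abelian ideal $A$ (Proposition \ref{solvid}), straightens it out by quadratures (Theorem \ref{t1}, Corollary \ref{straight}), projects $\Gamma$ to the leaf space $B=M/A$, applies induction on $\dim B$ to the projected (still solvable, transitive) algebra, and finishes by integrating an inhomogeneous linear equation along the $A$-direction, which is where constancy of the structure constants enters for the paper (via $[\Gamma,\mathbf{v}]=H\mathbf{v}$). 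Your top-down scheme gives a direct, self-contained proof of the classical statement, with one quadrature per step and no need for Proposition \ref{solvid} (here $\langle X_1\rangle$ is an ideal by hypothesis); the paper's bottom-up scheme is what allows the generalisation to algebras of dimension larger than $\dim M$, where a pointwise frame and dual coframe are no longer available.

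Two caveats on your write-up. First, the ``delicate point'' you flag --- assembling the leafwise primitives into finitely many quadratures on $U$ --- is resolved correctly in spirit by your strengthened induction, but your parenthetical mechanism is not quite right: an exponential factor such as $e^{-c^{\,n-1}_{(n-1)n}y^n}$ does not in general make $\omega^{n-1}$ closed on all of $U$, because $d\omega^{n-1}$ also contains terms $c^{\,n-1}_{in}\,\omega^i\wedge\omega^n$ with $i\le n-2$, and the next ideal $L''$ is only an ideal of $L'$, not of $L$. What does work, and is all your induction needs, is exactly the device the paper uses in the proof of Theorem \ref{t1}: the form in question is closed on each leaf (every obstructing term contains $\omega^n$, which pulls back to zero), so one integrates it along the leaves starting from a smooth local transversal; the resulting ``synchronized'' primitives depend smoothly on the transverse coordinates and constitute a single quadrature with parameters on $U$. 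Second, with your construction the final normal form is in fact $X_1=\partial_{y^1}$ (since $X_1y^1=1$ and $X_1y^j=0$ for $j\ge 2$), so the closing step is even simpler than the form $X_1=h(y^2,\dots,y^n)\,\partial_{y^1}$ you state; either way the flow follows, so this is cosmetic rather than a gap.
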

}
A {more general} integrability criterium we proposed in \cite{CFGR15} is based on the existence of a  $n$-dimensional Lie algebra  $L$   of vector fields in a $n$-dimensional manifold $M$ containing  the given vector field,
 very much in the approach started by Lie, and the construction of a decreasing sequence of nested Lie subalgebras of $L$ such that all of them contain the dynamical vector field $\Gamma$. If in some step the resulting Lie algebra is Abelian,
 we can obtain  the general solution with one more quadrature.

 Solvability of $L$ was proved to be a necessary conditions for integrability in the sense we proposed.  Moreover,  we
 proved that if the  solvable Lie algebra admits an Abelian ideal $A$, then any vector field of $A$ is integrable by quadratures (this fact being a direct generalisation of Lie result), and finally in the particular case of $L$ being a nilpotent Lie algebra { (for a general form of a transitive nilpotent Lie algebra of vector fields consult \cite{JG90})}, every vector field of $L$ was proved to be integrable.
A concept of \emph{distributional integrability} we introduced later on, allows for a much larger class of examples and potential applications.

A different result is due to Kozlov \cite{K05}. It can be formulated as follows.

\begin{theorem} Let vector fields, $X_1$,\ldots,$X_n$, be linearly independent at each point of  an open set  $U\subset\mathbb{R}^n$ and span a Lie algebra $L$ such that the corresponding operators of the adjoint representation ad$_{X_i}=[X_i,\cdot]$ have a common triangular form
\begin{equation}\label{triang}
[X_i,X_j]={\sum_{k=1}^i } C_{ij}\,^k\, X_k, \qquad C_{ij}\,^k\in \mathbb{R}\,.
\end{equation}
Then, all the vector fields $X_i$, $i=1,\dots, n$, are integrable by quadratures.
\end{theorem}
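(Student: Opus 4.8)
The hypothesis (\ref{triang}) says exactly that the subspaces $L_k=\mathrm{span}_{\mathbb R}(X_1,\dots,X_k)$ form a complete flag of ideals of $L$: antisymmetry of the bracket together with (\ref{triang}) gives $[X_i,X_j]\in L_{\min(i,j)}$, whence $[L,L_k]\subset L_k$; the quotients $L_k/L_{k-1}$ are one-dimensional, so $L$ is solvable and $L_1=\mathrm{span}(X_1)$ is an abelian ideal, with $[X_1,X_i]=\lambda_i X_1$, $\lambda_i:=C_{1i}\,^1\in\mathbb R$. In particular $X_1$ is integrable by quadratures already by Lie's classical Theorem~\ref{t-Lie} (equivalently, by the abelian-ideal result of \cite{CFGR15}). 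The plan is then to prove the statement by induction on $n=\dim M$, the case $n=1$ being a single quadrature.

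For the inductive step I would first use the integrability of $X_1$ to build, by quadratures, a rectifying chart $(x^1,u)=(x^1,x^2,\dots,x^n)$ in which $X_1=\partial/\partial x^1$ (flowing $X_1$ from a hypersurface transverse to it; the only non-algebraic operations involved are the quadratures producing the flow and one inversion of functions). Writing $X_m=\sum_a\xi^a\,\partial/\partial x^a$ for $m\ge 2$, the relation $[X_1,X_m]=\lambda_m X_1$ forces $\partial\xi^1/\partial x^1=\lambda_m$ and $\partial\xi^a/\partial x^1=0$ for $a\ge2$, so
\begin{equation*}
X_m=\bigl(\lambda_m x^1+h(u)\bigr)\frac{\partial}{\partial x^1}+Y_m,\qquad Y_m:=\sum_{a\ge2}\xi^a(u)\,\frac{\partial}{\partial x^a},
\end{equation*}
with $h$ and the $\xi^a$ known (obtained from the given $X_m$ via the admissible change of chart). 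Thus $X_m$ is projectable along the fibres $\{u=\mathrm{const}\}$ of $\pi\colon(x^1,u)\mapsto u$, and on the $(n{-}1)$-dimensional leaf space $M_1$ (coordinatised by $u$) one gets $\bar X_m:=\pi_*X_m=Y_m$; since $[\bar X_i,\bar X_j]=\sum_{k=2}^{\min(i,j)}C_{ij}\,^k\,\bar X_k$, the pointwise independent fields $\bar X_2,\dots,\bar X_n$ again satisfy (\ref{triang}) after the index shift $k\mapsto k-1$, so by the inductive hypothesis all of them are integrable by quadratures on $M_1$.

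It then remains to reconstruct the flow of each $X_m$ ($m\ge2$) on $M$ from that of $\bar X_m$ on $M_1$. Along an integral curve the $u$-components satisfy $\dot u=\bar X_m(u)$, solved by quadratures by the previous step to yield $u(t)$; the remaining component then obeys the \emph{linear} non-autonomous scalar equation $\dot x^1=\lambda_m x^1+h\bigl(u(t)\bigr)$, whose solution $x^1(t)=e^{\lambda_m t}\bigl(x^1(0)+\int_0^t e^{-\lambda_m s}h(u(s))\,ds\bigr)$ costs exactly one further quadrature. This integrates all of $X_1,\dots,X_n$ and closes the induction.

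The step I expect to require the most care is the reduction/reconstruction bookkeeping: one must check that the $(n{-}1)$-dimensional reduced problem on $M_1$ is genuinely produced by admissible operations (the rectification of $X_1$ by quadratures plus an inversion), so that the inductive hypothesis applies verbatim, and --- the one real analytic input --- that the transverse equation for $x^1$ is \emph{linear}. That linearity is no accident: it is forced by $L_1=\mathrm{span}(X_1)$ being an ideal, i.e.\ by $[X_1,X_m]=\lambda_m X_1$, a feature that (\ref{triang}) guarantees at every stage of the induction. Finally, since (\ref{triang}) makes $L$ solvable and transitive, the theorem is also subsumed by the main result of this paper; the argument above is its self-contained special case.
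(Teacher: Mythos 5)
Your proof is correct, and it is worth noting that the paper itself never proves this statement directly: it is quoted as Kozlov's theorem, and the paper's own route to it is as a special case of the main theorem of Section 5 (the triangular relations (\ref{triang}) make $L$ solvable, and pointwise independence of $n$ fields on $U\subset\mathbb{R}^n$ makes it transitive). Your argument is essentially that Section 5 induction specialized to this setting: you exploit the flag of ideals $L_k=\langle X_1,\dots,X_k\rangle$ to get the explicit one-dimensional Abelian ideal $\langle X_1\rangle$ (so you never need Proposition \ref{solvid} and its two-dimensional case), you straighten it out and pass to the leaf space, and you reconstruct the flow from the scalar linear non-autonomous equation for $x^1$, exactly as in the paper's system (\ref{sys1})--(\ref{sys2}) with $H$ the scalar $\lambda_m$. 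The only methodological difference is in how the rectifying chart is produced by quadratures: you use the flow-box construction (flow of $X_1$ from a transversal plus one inversion of functions, admissible under the paper's definition of integrability by quadratures), whereas the paper integrates closed $1$-forms annihilating the distributions $\mathcal{D}_A^k$ (Theorem \ref{t1}, Corollary \ref{straight}); the latter has the advantage of working uniformly for Abelian ideals of rank bigger than one, which your special case does not require. A further small bonus of your version is that the triangular structure is preserved under projection (after the index shift), so the induction runs entirely within the hypotheses of Kozlov's statement and yields a self-contained proof that integrates every $X_i$, not just a preferred one.
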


It seems that the belief of the author was that any solvable Lie algebra  admits a basis satisfying (\ref{triang}); see also \cite{K13}. This is valid for complex  Lie algebras but is generally false in the real case (see the next section). Moreover, the important restriction here  is that  the dimension $n$ of the   Lie algebra $L$ coincides with that of the manifold $M$, which {is a strong restriction and} essentially means that the manifold we deal with is practically a $n$-dimensional Lie group.

In this paper we present a substantial generalisation of the above results by Lie and Kozlov on integrability by quadratures. Namely, we are able to drop the assumption that the dimension of the Lie algebra coincides with the dimension of the manifold and prove that all vector fields in a finite-dimensional transitive and solvable Lie algebra of vector fields on a manifold can be integrated by quadratures. This substantially enlarges the number of examples. Recall that a Lie algebra of vector fields on a manifold $M$ is \emph{transitive} if it just spans the tangent bundle $TM$. It seems that it is a necessary assumption, as a deeper description of non-transitive Lie algebras of vector fields are generally out of our reach.
Note that in many cases transitive Lie algebras of vector fields, including those of infinite-dimension, determine the manifold \cite{JG78,S74}.

The paper is organised as follows: Section 2 is devoted to classical examples of solvable non-autonomous systems we will use in the proof of our results. In Section 3, we
introduce the notation and recall basic definitions and well-known results
of the theory of solvable Lie algebras. In particular, we consider some sequences of Lie algebras and ideals to be used in the sequel.
In Section 4, we develop a Lie-like formalism for a transitive finite-dimensional real Lie algebra $L$ of vector fields on a  differentiable manifold $M$ to integrate by quadratures some associated foliations.
The main result is proven in Section 5 which ends with an illustrative example not covered neither by the classical Lie's nor the Kozlov's result.

{ \section{Integrability by quadratures of non-autonomous systems}
We will start with classical examples of integrability by quadratures of non-autonomous systems which we will need in the sequel.

Consider first the non-autonomous inhomogeneous linear differential  equation in dimension one,}
\begin{equation}
\dot x=c_0(t)+ c_1(t)\,x, \label{ileq}
\end{equation}
which is well known to be integrable in terms of two  quadratures:
\begin{equation}
x(t)=\exp\left(\int_0^t c_1(t')\, dt'\right)\left[x_0+\int_0^{t}\exp\left(-\int_0^{t'}c_1(t'')\,
 dt''\right)c_0(t')\,dt'\right].\label{sileq}
\end{equation}
Another example is given by the  nonautonomous system of  differential  equations
\begin{equation}
\dot x^i=\sum_{j=1}^nH^i\,_j\,x^j+b^i(t), \quad i=1,\ldots ,n,\qquad \label{ilseq}
\end{equation}
where $H^i\,_j$ are real numbers. Then, the solution starting from the point $\boldsymbol{x}_0$ is given by
\begin{equation}
\boldsymbol{x}(t)=\exp\left(Ht\right)\left[\boldsymbol{x}_0+\int_0^t \exp\left(-Ht'\right) \, \boldsymbol{b}(t') dt'\right].\label{ssieq}
\end{equation}
These two examples will play an important role in the proof of the main theorem in this paper.

{ \section{Solvable Lie algebras}

Recall that the derived algebra of a  Lie algebra $(\mathfrak {g},[\cdot,\cdot ])$  is the subalgebra $\mathfrak {g}^{1}$ of $\mathfrak {g}$, defined by
$\mathfrak {g}^{1}=[{\mathfrak {g}},{\mathfrak {g}}]$, while the derived series is the sequence
 of Lie subalgebras defined by $\mathfrak {g}^{0}=\mathfrak {g}$ and
\begin{equation}
\mathfrak {g}^{k+1}=[\mathfrak {g}^{k},\mathfrak {g}^{k}],\quad k\in\mathbb{N}.
\end{equation}
Such a sequence satisfies $\mathfrak {g}^{k+1}\subset \mathfrak {g}^{k}$, i.e.
$$  {\mathfrak {g}}\supset [{\mathfrak {g}},{\mathfrak {g}}]\supset [[{\mathfrak {g}},{\mathfrak {g}}],[{\mathfrak {g}},{\mathfrak {g}}]]\supset [[[{\mathfrak {g}},{\mathfrak {g}}],[{\mathfrak {g}},{\mathfrak {g}}]],[[{\mathfrak {g}},{\mathfrak {g}}],[{\mathfrak {g}},{\mathfrak {g}}]]]\supset \cdots,
$$
and the Lie algebra $\mathfrak {g}$ is said to be \emph{solvable} if the derived series eventually arrives at the zero subalgebra, i.e. there exists {the} smallest
natural number $m$ such that
$\mathfrak {g}^{m+1}=\{0\}$ or, in other words, $\mathfrak {g}^{m}$ is Abelian.
}

On the other hand,  such a solvable Lie algebra  $\mathfrak {g}$ always admits a nontrivial ideal. For instance, remark that   $[\mathfrak {g},\mathfrak {g}]$ is strictly contained in $\mathfrak {g}$
and that any codimension one linear subspace $S$  containing  $[\mathfrak {g},\mathfrak {g}]$ is an ideal.
Note also that, using the Jacobi identity and mathematical induction on the index $k$, one can easily check that for {any} $k\in\mathbb{N}$, $[\mathfrak {g},\mathfrak {g}^{k}]\subset  \mathfrak {g}^{k}$,  i.e.
$\mathfrak {g}^{k}$ is an ideal of $\mathfrak {g}$.
Moreover, $\mathfrak {g}^{m}$ is an Abelian ideal.
For later purposes we will be interested in Abelian ideals
of dimension not greater than two.

Another important Lie theorem {(see e.g. \cite{H72, S90})} ensures that every { finite-dimensional} representation of a solvable Lie algebra { over an algebraically closed field}
has an eigenvector common to all
the operators of the representation. If we consider the adjoint
representation, the theorem implies that any { finite-dimensional} complex, solvable Lie algebra has a one dimensional ideal.
This result is not true in general for real Lie algebras, as  the following
example shows.

\begin{example} The real Lie algebra
$\mathfrak{e}(2)$  of the {Euclidean} group in two dimensions
is given by
$\mathfrak{e}(2)=\langle X_1,X_2,J\rangle$
with Lie brackets
$$[X_1,X_2]=0,\quad [X_1,J]=X_2,\quad [X_2,J]=-X_1.$$
It is solvable as its derived algebra $\mathfrak{d}=[\mathfrak{e}(2),
\mathfrak{e}(2)]=\langle X_1,X_2\rangle$
is Abelian.
However, its only non-trivial ideal is precisely
$\mathfrak{d}$ which has dimension 2.
\end{example}

We will be interested in the real framework and in this case,
using again the mentioned  Lie theorem \cite{H72, S90}, we can prove that every real solvable
Lie algebra has an Abelian ideal of dimension 1 or 2, which will be just
enough for our purposes. {In fact, we can consider the complexified Lie algebra and its adjoint representation for which we can use the standard Lie theorem.} As there is  a common complex eigenvalue  $\lambda$,  its complex  conjugate $\bar\lambda$ is another eigenvalue {(the characteristic polynomial is real)} and the complex subspace spanned by the eigenvectors of both contains {an at most} two-dimensional real subspace which is an Abelian ideal of the given solvable Lie algebra. This proves the following:
\begin{proposition}\label{solvid}
Any solvable finite-dimensional real Lie algebra $L$ contains {an Abelian} Lie ideal $A$ of dimension $1$ or $2$.
\end{proposition}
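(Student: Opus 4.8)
The plan is to reduce the real statement to the complex Lie theorem that has already been recalled, namely that every finite-dimensional representation of a solvable Lie algebra over an algebraically closed field admits a common eigenvector. First I would complexify: set $L^{\mathbb C}=L\otimes_{\mathbb R}\mathbb C$, which is again solvable (the derived series of $L^{\mathbb C}$ is the complexification of that of $L$, hence terminates), and consider its adjoint representation $\mathrm{ad}\colon L^{\mathbb C}\to \mathfrak{gl}(L^{\mathbb C})$. By the cited Lie theorem there is a nonzero $z\in L^{\mathbb C}$ and a linear functional $\lambda\colon L^{\mathbb C}\to\mathbb C$ with $[x,z]=\lambda(x)\,z$ for all $x\in L^{\mathbb C}$; in particular $\mathbb C z$ is a one-dimensional ideal of $L^{\mathbb C}$.

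Next I would descend back to the real form. Write $z=u+iv$ with $u,v\in L$. Complex conjugation on $L^{\mathbb C}$ (with respect to the real form $L$) is a Lie algebra automorphism, so applying it to $[x,z]=\lambda(x)z$ for $x\in L$ gives $[x,\bar z]=\overline{\lambda(x)}\,\bar z=\overline{\lambda(x)}(u-iv)$. Hence the complex subspace $W=\mathrm{span}_{\mathbb C}\{z,\bar z\}$ is an ideal of $L^{\mathbb C}$, and it is invariant under conjugation, so $A:=W\cap L$ is a real subspace of $L$ with $A\otimes\mathbb C=W$, whence $\dim_{\mathbb R}A=\dim_{\mathbb C}W\in\{1,2\}$. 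Since $W$ is an ideal of $L^{\mathbb C}$ and $A=W\cap L$, for any $x\in L$ and $a\in A$ we have $[x,a]\in W\cap L=A$, so $A$ is a real ideal of $L$. Concretely: if $v$ is a real multiple of $u$ (equivalently $\lambda$ is real on $L$, or $z$ is proportional to a real vector) then $\dim A=1$; otherwise $u,v$ are $\mathbb R$-independent and $A=\mathrm{span}_{\mathbb R}\{u,v\}$ has dimension $2$.

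Finally I would check that $A$ is Abelian. From $[z,z]=0$ and $[z,\bar z]=\lambda(z)\,\bar z$ combined with its conjugate $[\bar z,z]=\overline{\lambda(z)}\,z$; but $[\bar z,z]=-[z,\bar z]$, so $\lambda(z)\bar z=-\overline{\lambda(z)}z$. If $z,\bar z$ are $\mathbb C$-independent (the two-dimensional case) this forces $\lambda(z)=0$, hence $[z,\bar z]=0$ and all brackets among $z,\bar z$ vanish, so $W$ — and therefore $A$ — is Abelian. In the one-dimensional case $A=\mathbb R a$ is automatically Abelian. This yields the proposition.

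The only genuinely delicate point is bookkeeping with the reality structure: making sure that conjugation is used correctly so that $W$ is conjugation-stable (which is what guarantees $A\otimes\mathbb C=W$ and hence the real dimension is exactly $1$ or $2$, not merely $\le 2$), and separating the cases according to whether the eigenvalue $\lambda$ is real. None of this is hard, but it is the step where a careless argument could wrongly conclude a three-dimensional ideal or miss that $A$ is an ideal rather than merely a subalgebra. Everything else is a direct invocation of the classical Lie theorem already quoted in the text.
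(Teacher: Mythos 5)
Your proposal is correct and follows essentially the same route as the paper: complexify $L$, apply Lie's theorem to the adjoint representation of $L^{\mathbb C}$ to get a common eigenvector, pair it with its complex conjugate, and take the real points of the conjugation-stable span to obtain an Abelian ideal of dimension $1$ or $2$. You merely flesh out details the paper leaves implicit (why the real subspace is an ideal, why its dimension is exactly $1$ or $2$, and the antisymmetry argument forcing $[z,\bar z]=0$), so there is nothing to correct beyond trivial bookkeeping of which argument of $\lambda$ appears in the cross-brackets.
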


Another well-known series of nested algebras is the central series;
it is defined by $\mathfrak {g}_{0}=\mathfrak {g}$ and
\begin{equation}
\mathfrak {g}_{k+1}=[\mathfrak {g},\mathfrak {g}_{k}],\quad k=0,1,\dots.
\end{equation}

It is clear that $\mathfrak {g}_{k+1}\subset\mathfrak {g}_{k}$
and $\mathfrak {g}^{k}\subset\mathfrak {g}_{k}$.

If the central series stabilises at zero, the Lie algebra is called \emph{nilpotent}.
The preceding property ensures that nilpotent Lie algebras are
always solvable. On the other hand, if an algebra is nilpotent or solvable,
then so are its subalgebras {and quotients}.

A third series we will be interested in requires to pick up first an
Abelian subalgebra
$\mathfrak{a}\subset \mathfrak{g}$.
It is defined by
\begin{equation}\label{defgai}
\mathfrak{g}_{\mathfrak{a}}^0=\mathfrak{g},\qquad
\mathfrak {g}_{\mathfrak{a}}^{k+1}=[\mathfrak {g}_{\mathfrak{a}}^{k},\mathfrak {g}_{\mathfrak{a}}^{k}]+\mathfrak{a},\quad k=0,1,\dots.
\end{equation}

One can easily show the following

\begin{proposition} The preceding sequence is a decreasing sequence of  Lie subalgebras, i.e. $\mathfrak{g}_{\mathfrak{a}}^{i+1}\subset \mathfrak{g}_{\mathfrak{a}}^{i}$ for any index $i$, and moreover
$\mathfrak{g}_{\mathfrak{a}}^{i+1}$ is an ideal in $\mathfrak{g}_{\mathfrak{a}}^{i}$.
\end{proposition}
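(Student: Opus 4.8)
The plan is to prove the statement by induction on the index $k$, but with a hypothesis strengthened so that it actually propagates: I will show simultaneously that, for every $k$, the set $\mathfrak{g}_{\mathfrak{a}}^{k}$ is a Lie subalgebra of $\mathfrak{g}$ containing $\mathfrak{a}$, that $\mathfrak{g}_{\mathfrak{a}}^{k+1}\subset\mathfrak{g}_{\mathfrak{a}}^{k}$, and that $\mathfrak{g}_{\mathfrak{a}}^{k+1}$ is an ideal of $\mathfrak{g}_{\mathfrak{a}}^{k}$. The inclusion $\mathfrak{a}\subset\mathfrak{g}_{\mathfrak{a}}^{k}$ is automatic for $k\geq 1$ from the definition (\ref{defgai}) and holds for $k=0$ because $\mathfrak{a}$ is by assumption a subalgebra of $\mathfrak{g}$; the point is that it must be kept explicitly inside the inductive loop, since it is exactly what makes the other two properties work.

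Before the induction I would record two elementary facts about an arbitrary Lie algebra $\mathfrak{h}$. First, if $I$ is an ideal of $\mathfrak{h}$ and $S$ is a subalgebra of $\mathfrak{h}$, then $I+S$ is again a subalgebra, because $[I+S,I+S]\subset [I,\mathfrak{h}]+[S,S]\subset I+S$. Second, the bracket is monotone: if $A\subset B$ then $[A,A]\subset[B,B]$. I will also use the standard fact that $[\mathfrak{h},\mathfrak{h}]$ is always an ideal of $\mathfrak{h}$.

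For the base case $k=0$ everything is immediate: $\mathfrak{g}_{\mathfrak{a}}^{0}=\mathfrak{g}$ is a subalgebra containing $\mathfrak{a}$, $\mathfrak{g}_{\mathfrak{a}}^{1}=[\mathfrak{g},\mathfrak{g}]+\mathfrak{a}\subset\mathfrak{g}$ is a subalgebra by the first fact, and it is an ideal of $\mathfrak{g}$ since $[\mathfrak{g},[\mathfrak{g},\mathfrak{g}]+\mathfrak{a}]\subset[\mathfrak{g},\mathfrak{g}]\subset\mathfrak{g}_{\mathfrak{a}}^{1}$, using $\mathfrak{a}\subset\mathfrak{g}$. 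For the inductive step, assume the three properties hold up to index $k$ and write $\mathfrak{h}=\mathfrak{g}_{\mathfrak{a}}^{k}$, a subalgebra with $\mathfrak{a}\subset\mathfrak{h}$ and $\mathfrak{h}\subset\mathfrak{g}_{\mathfrak{a}}^{k-1}$. Then $\mathfrak{g}_{\mathfrak{a}}^{k+1}=[\mathfrak{h},\mathfrak{h}]+\mathfrak{a}$ is a subalgebra (first fact, since $\mathfrak{a}\subset\mathfrak{h}$) and it contains $\mathfrak{a}$. Monotonicity applied to $\mathfrak{h}\subset\mathfrak{g}_{\mathfrak{a}}^{k-1}$ gives $[\mathfrak{h},\mathfrak{h}]\subset[\mathfrak{g}_{\mathfrak{a}}^{k-1},\mathfrak{g}_{\mathfrak{a}}^{k-1}]\subset\mathfrak{g}_{\mathfrak{a}}^{k}$, and since $\mathfrak{a}\subset\mathfrak{g}_{\mathfrak{a}}^{k}$ as well, we obtain $\mathfrak{g}_{\mathfrak{a}}^{k+1}\subset\mathfrak{g}_{\mathfrak{a}}^{k}$. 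Finally $[\mathfrak{h},\mathfrak{g}_{\mathfrak{a}}^{k+1}]=[\mathfrak{h},[\mathfrak{h},\mathfrak{h}]]+[\mathfrak{h},\mathfrak{a}]\subset[\mathfrak{h},\mathfrak{h}]\subset\mathfrak{g}_{\mathfrak{a}}^{k+1}$, where I again used $\mathfrak{a}\subset\mathfrak{h}$ to absorb $[\mathfrak{h},\mathfrak{a}]$ into $[\mathfrak{h},\mathfrak{h}]$; hence $\mathfrak{g}_{\mathfrak{a}}^{k+1}$ is an ideal of $\mathfrak{g}_{\mathfrak{a}}^{k}$. This closes the induction.

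There is no genuine obstacle here; the statement is essentially bookkeeping, and the one thing to stay attentive to is the inclusion $\mathfrak{a}\subset\mathfrak{g}_{\mathfrak{a}}^{k}$. Dropping it would break both the ``ideal plus subalgebra is a subalgebra'' step and the absorption $[\mathfrak{h},\mathfrak{a}]\subset[\mathfrak{h},\mathfrak{h}]$, so it has to travel inside the induction rather than be invoked once at the end.
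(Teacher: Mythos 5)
Your proof is correct and follows essentially the same route as the paper: induction on the index using monotonicity of the bracket, $\mathfrak{g}_{\mathfrak{a}}^{k+1}=[\mathfrak{g}_{\mathfrak{a}}^{k},\mathfrak{g}_{\mathfrak{a}}^{k}]+\mathfrak{a}\subset[\mathfrak{g}_{\mathfrak{a}}^{k-1},\mathfrak{g}_{\mathfrak{a}}^{k-1}]+\mathfrak{a}=\mathfrak{g}_{\mathfrak{a}}^{k}$, followed by the ideal property. The only (harmless) difference is that the paper does not need to carry $\mathfrak{a}\subset\mathfrak{g}_{\mathfrak{a}}^{k}$ through the induction: once the nesting is established, the chain $[\mathfrak{g}_{\mathfrak{a}}^{i+1},\mathfrak{g}_{\mathfrak{a}}^{i}]\subset[\mathfrak{g}_{\mathfrak{a}}^{i},\mathfrak{g}_{\mathfrak{a}}^{i}]\subset\mathfrak{g}_{\mathfrak{a}}^{i+1}$ gives the ideal (and hence subalgebra) property in one line, whereas you obtain the same by splitting $[\mathfrak{h},[\mathfrak{h},\mathfrak{h}]+\mathfrak{a}]$ and absorbing $[\mathfrak{h},\mathfrak{a}]$.
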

\begin{proof}
First we apply the mathematical induction to show that
$\mathfrak{g}_{\mathfrak{a}}^{i+1}\subset \mathfrak{g}_{\mathfrak{a}}^{i}$ {for any $i$}.

Certainly $\mathfrak{g}_{\mathfrak{a}}^{1}\subset \mathfrak{g}_{\mathfrak{a}}^{0}=\mathfrak{g}$. Recalling the definition (\ref{defgai}) of the sequence $\mathfrak{g}_{\mathfrak{a}}^i$
and using the induction hypothesis,
($\mathfrak{g}_{\mathfrak{a}}^{i}\subset\mathfrak{g}_{\mathfrak{a}}^{i-1}$ for $i\leq k$), we get
$$\mathfrak{g}_{\mathfrak{a}}^{k+1}=[\mathfrak{g}_{\mathfrak{a}}^{k},\mathfrak{g}_{\mathfrak{a}}^{k}]+
\mathfrak{a}\subset[\mathfrak{g}_{\mathfrak{a}}^{k-1},\mathfrak{g}_{\mathfrak{a}}^{k-1}]+\mathfrak{a}=
\mathfrak{g}_{\mathfrak{a}}^{k}\,,$$
{that ends the induction.}
{Now, for each index $i$, the inclusions
$$[\mathfrak{g}_{\mathfrak{a}}^{i+1},\mathfrak{g}_{\mathfrak{a}}^{i}]
\subset[\mathfrak{g}_{\mathfrak{a}}^{i},\mathfrak{g}_{\mathfrak{a}}^{i}]\subset \mathfrak{g}_{\mathfrak{a}}^{i+1},$$
show that $\mathfrak{g}_{\mathfrak{a}}^{i+1}$ is an ideal in $\mathfrak{g}_{\mathfrak{a}}^{i}$ (thus a Lie subalgebra}.
\end{proof}

The relative position of the series is given by the following
relation
$$\mathfrak{g}^{k}+\mathfrak{a}
\subset
\mathfrak{g}_{\mathfrak{a}}^{k}
\subset
\mathfrak{g}_{k}+\mathfrak{a}
$$
that can be easily proved by induction.
One also has $\mathfrak{g}_{\mathfrak{a}'}^{k}
\subset\mathfrak{g}_{\mathfrak{a}}^{k}$ for $\mathfrak{a}'\subset\mathfrak{a}$.

\begin{definition}
If the series $\mathfrak{g}_{\mathfrak{a}}^{k}$ stabilises at $\mathfrak{a}$, then
we say that $\mathfrak{g}$ is \emph{$\mathfrak{a}$-solvable}.
\end{definition}

The preceding relations imply that if  $\mathfrak{g}$ is
$\mathfrak{a}$-solvable for some Abelian subalgebra $\mathfrak{a}$,
then it is solvable, and if $\mathfrak{g}$ is nilpotent, then
it is  $\mathfrak{a}$-solvable for any $\mathfrak{a}\subset \mathfrak{g}$.
Also, $\mathfrak{a}$-solvability implies $\mathfrak{a}'$-solvability for
any $\mathfrak{a'}\subset\mathfrak{a}$.

One can easily show the following
{ \begin{proposition}
If $\mathfrak{a}\vartriangleleft \,
\mathfrak{g}$ is an Abelian Lie ideal,
we have
\begin{equation}\label{id-a}
\mathfrak{g}_{\mathfrak{a}}^k=\mathfrak{g}^k+\mathfrak{a}\,.
\end{equation}
In particular, the sequence $(\mathfrak{g}_{\mathfrak{a}}^k)$ is a decreasing sequence of  Lie ideals in $\mathfrak{g}$.
\end{proposition}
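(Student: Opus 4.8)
The plan is to establish the identity $\mathfrak{g}_{\mathfrak{a}}^k=\mathfrak{g}^k+\mathfrak{a}$ by induction on $k$, and then to read off the ``in particular'' clause from facts already recorded above, namely that each derived subalgebra $\mathfrak{g}^k$ is an ideal of $\mathfrak{g}$ and that the derived series is decreasing. Since the identity reduces $\mathfrak{g}_{\mathfrak{a}}^k$ to a sum of two ideals, the conclusion will be essentially immediate once the identity is in hand.

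For the base of the induction one checks $\mathfrak{g}_{\mathfrak{a}}^0=\mathfrak{g}=\mathfrak{g}^0+\mathfrak{a}$ (using $\mathfrak{a}\subset\mathfrak{g}$), or equivalently starts at $k=1$ where $\mathfrak{g}_{\mathfrak{a}}^1=[\mathfrak{g},\mathfrak{g}]+\mathfrak{a}=\mathfrak{g}^1+\mathfrak{a}$ directly from the definition~(\ref{defgai}). For the inductive step, assuming $\mathfrak{g}_{\mathfrak{a}}^k=\mathfrak{g}^k+\mathfrak{a}$, I would substitute into the recursion and expand the bracket by bilinearity,
$$\mathfrak{g}_{\mathfrak{a}}^{k+1}=[\mathfrak{g}^k+\mathfrak{a},\mathfrak{g}^k+\mathfrak{a}]+\mathfrak{a}=[\mathfrak{g}^k,\mathfrak{g}^k]+[\mathfrak{g}^k,\mathfrak{a}]+[\mathfrak{a},\mathfrak{a}]+\mathfrak{a}.$$
The two hypotheses on $\mathfrak{a}$ then do all the work: $\mathfrak{a}$ Abelian gives $[\mathfrak{a},\mathfrak{a}]=\{0\}$, and $\mathfrak{a}$ an ideal gives $[\mathfrak{g}^k,\mathfrak{a}]\subset[\mathfrak{g},\mathfrak{a}]\subset\mathfrak{a}$, so the right-hand side collapses to $\mathfrak{g}^{k+1}+\mathfrak{a}$, closing the induction. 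The only point that needs any care — and it is a mild one — is precisely the absorption of the cross term $[\mathfrak{g}^k,\mathfrak{a}]$ into $\mathfrak{a}$: this is where being an ideal (not merely an Abelian subalgebra) is indispensable, and it is exactly the place where the analogous statement fails for a general Abelian subalgebra.

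For the final assertion, I would invoke the remark made earlier that $\mathfrak{g}^k$ is an ideal of $\mathfrak{g}$ together with the hypothesis that $\mathfrak{a}$ is an ideal of $\mathfrak{g}$; the sum of two ideals is again an ideal, so $\mathfrak{g}_{\mathfrak{a}}^k=\mathfrak{g}^k+\mathfrak{a}$ is an ideal of $\mathfrak{g}$ for every $k$. That the sequence is decreasing then follows at once from $\mathfrak{g}^{k+1}\subset\mathfrak{g}^k$, hence $\mathfrak{g}^{k+1}+\mathfrak{a}\subset\mathfrak{g}^k+\mathfrak{a}$; alternatively this is already contained in the preceding proposition.
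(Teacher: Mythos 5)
Your proof is correct and follows essentially the same route as the paper: induct on $k$, expand $[\mathfrak{g}^k+\mathfrak{a},\mathfrak{g}^k+\mathfrak{a}]$ by bilinearity, kill $[\mathfrak{a},\mathfrak{a}]$ by commutativity and absorb $[\mathfrak{g}^k,\mathfrak{a}]$ into $\mathfrak{a}$ by the ideal property, then conclude the ideal statement because a sum of ideals is an ideal. The only cosmetic difference is that you run a single induction proving the equality outright, whereas the paper establishes the two inclusions $\mathfrak{g}^k+\mathfrak{a}\subset\mathfrak{g}^k_{\mathfrak{a}}$ and $\mathfrak{g}^k_{\mathfrak{a}}\subset\mathfrak{g}^k+\mathfrak{a}$ by separate (but identical in spirit) inductions.
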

\begin{proof}
An easy induction shows that $\mathfrak{g}^k\subset\mathfrak{g}_{\mathfrak{a}}^k$,
so clearly also $\mathfrak{g}^k+\mathfrak{a}\subset\mathfrak{g}^k_{\mathfrak{a}}$.

To show the other inclusion, we can observe that
if $\mathfrak{g}_{\mathfrak{a}}^{k}\subset\mathfrak{g}^{k}+\mathfrak{a}$, then
$$\mathfrak{g}_{\mathfrak{a}}^{k+1}\subset[\mathfrak{g}^{k}+\mathfrak{a},\mathfrak{g}^{k}+\mathfrak{a}]
\subset \mathfrak{g}^{k+1}+[\mathfrak{g}^{k},\mathfrak{a}]+[\mathfrak{a},\mathfrak{a}]\subset
\mathfrak{g}_{\mathfrak{a}}^{k+1}+\mathfrak{a}\,,$$
since $\mathfrak{a}$ is a Lie ideal in $\mathfrak{g}$.
Of course, $\mathfrak{g}^{k}+\mathfrak{a}$ is a Lie ideal in $\mathfrak{g}$ as the sum of two ideals.
\end{proof}
The following corollary is an immediate consequence of the above proposition.
\begin{corollary}\label{asolvable}
For any Abelian Lie ideal $\mathfrak{a}
\vartriangleleft \,
\mathfrak{g}$, the Lie algebra $\mathfrak{g}$ is
$\mathfrak{a}$-solvable if and only if it is solvable.
\end{corollary}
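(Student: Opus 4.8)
The plan is to read the corollary off the identity $\mathfrak{g}_{\mathfrak{a}}^k=\mathfrak{g}^k+\mathfrak{a}$ from the preceding proposition, which carries all the real content; the rest is just bookkeeping with the fact that $\mathfrak{a}$ is Abelian. First I would reformulate $\mathfrak{a}$-solvability in terms of the derived series: by definition $\mathfrak{g}$ is $\mathfrak{a}$-solvable exactly when $\mathfrak{g}_{\mathfrak{a}}^m=\mathfrak{a}$ for some $m$, and one checks that once this holds the series genuinely stabilises, since $\mathfrak{a}$ Abelian gives $\mathfrak{g}_{\mathfrak{a}}^{k+1}=[\mathfrak{g}_{\mathfrak{a}}^k,\mathfrak{g}_{\mathfrak{a}}^k]+\mathfrak{a}=[\mathfrak{a},\mathfrak{a}]+\mathfrak{a}=\mathfrak{a}$ as soon as $\mathfrak{g}_{\mathfrak{a}}^k=\mathfrak{a}$. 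Feeding in $\mathfrak{g}_{\mathfrak{a}}^m=\mathfrak{g}^m+\mathfrak{a}$, this condition becomes simply $\mathfrak{g}^m\subset\mathfrak{a}$.

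For the implication ``$\mathfrak{a}$-solvable $\Rightarrow$ solvable'' I would then argue that $\mathfrak{g}^m\subset\mathfrak{a}$ forces, by Abelianness of $\mathfrak{a}$, the inclusion $\mathfrak{g}^{m+1}=[\mathfrak{g}^m,\mathfrak{g}^m]\subset[\mathfrak{a},\mathfrak{a}]=\{0\}$, so $\mathfrak{g}$ is solvable. (As already remarked in the text, this half does not even require $\mathfrak{a}$ to be an ideal, but the identity makes it immediate.) For the converse, ``solvable $\Rightarrow$ $\mathfrak{a}$-solvable'', I would take $m$ with $\mathfrak{g}^{m+1}=\{0\}$; then trivially $\mathfrak{g}^{m+1}\subset\mathfrak{a}$, hence $\mathfrak{g}_{\mathfrak{a}}^{m+1}=\mathfrak{g}^{m+1}+\mathfrak{a}=\mathfrak{a}$, and by the stabilisation observation above the series stays at $\mathfrak{a}$, so $\mathfrak{g}$ is $\mathfrak{a}$-solvable.

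I do not expect any real obstacle in the corollary itself: its entire weight sits in the proposition $\mathfrak{g}_{\mathfrak{a}}^k=\mathfrak{g}^k+\mathfrak{a}$, whose inductive proof is where one genuinely uses that $\mathfrak{a}$ is an ideal (to keep $[\mathfrak{g}^k,\mathfrak{a}]\subset\mathfrak{a}$ under control). The only point demanding a little care is not to conflate ``the series stabilises at $\mathfrak{a}$'' with ``the series reaches $\{0\}$'': $\mathfrak{a}$-solvability is the correct relative notion, and its equivalence with ordinary solvability is precisely what licenses using the decreasing chain $(\mathfrak{g}_{\mathfrak{a}}^k)$ of ideals in place of the derived series in the geometric constructions of the following sections.
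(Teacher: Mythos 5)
Your proposal is correct and follows exactly the paper's route: the paper declares the corollary an immediate consequence of the identity $\mathfrak{g}_{\mathfrak{a}}^k=\mathfrak{g}^k+\mathfrak{a}$, and your argument just fills in the same routine details (stabilisation at $\mathfrak{a}$ is equivalent to $\mathfrak{g}^m\subset\mathfrak{a}$, which by Abelianness of $\mathfrak{a}$ kills $\mathfrak{g}^{m+1}$, and conversely $\mathfrak{g}^{m+1}=\{0\}$ gives $\mathfrak{g}_{\mathfrak{a}}^{m+1}=\mathfrak{a}$). No gaps; nothing further needed.
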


\section{Generalised Lie algorithm for integration.}

In \cite{CFGR15} we studied the integration by quadratures of an
element of a Lie algebra of point-wise independent vector fields.
In this section we are going to apply the same ideas to a more general
setup in which the vector fields in the Lie algebra do not need to be pointwise
independent. This allows us to consider group actions
that do not need to be locally free and in particular to approach
the integration of vector fields around stationary points.

Let $L$ be a transitive finite-dimensional real subalgebra of vector fields on a $n$-dimensional differentiable manifold $M$, $L\subset \mathfrak{X}(M)$ \cite{D12,P01,S74}, and
consider a vector field $\Gamma \in L$. We will simplify slightly
the notation and we will denote by $L_\Gamma^i$ the Lie subalgebra
$L_{\langle \Gamma\rangle} ^i$ in the series associated to the subalgebra $\mathfrak{a}=
\langle \Gamma\rangle\subset L$ introduced in (\ref{defgai}), that is
\begin{equation}\label{defLGi}
L_\Gamma^0=L,\quad L_\Gamma^{i}=[L_\Gamma^{i-1},L_\Gamma^{i-1}]+\langle\Gamma\rangle,\ i\in {\mathbb{N}}
\end{equation}
Next, we consider the generalised distribution spanned by
$L_\Gamma^{i}$, to be denoted $\mathcal{D}_\Gamma^i$.  It is obviously
(Frobenius) integrable, because  $L_\Gamma^{i}$ is a Lie algebra. {With some abuse of notation, with $\mathcal{D}_\Gamma^i$ we will denote also the corresponding module of sections of this distribution.}
If we denote by $M^i$ one of the leaves of  $\mathcal{D}_\Gamma^i$ of dimension $r_\Gamma^i$
we have that upon restriction to $M^i$,  $\mathcal{D}_\Gamma^{i+1}\subset\mathfrak{X}(M^i)$.
One can prove that it is a regular distribution.
\begin{proposition}\label{regular}
For every $k$, $\mathcal{D}_\Gamma^k\subset\mathfrak{X}(M^{k-1})$
is a regular distribution and
therefore if $M^k$ is one of {its} leaves,
then $L_\Gamma^k\subset\mathfrak{X}(M^{k})$ is a transitive
real Lie subalgebra of vector fields in $M^k$.
\end{proposition}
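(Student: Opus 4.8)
The plan is to prove the statement by induction on $k$, the base case $k=0$ being the hypothesis that $L=L_\Gamma^0$ is a transitive Lie algebra of vector fields on $M=M^0$ and that $\mathcal D_\Gamma^0=TM$ is trivially regular. So assume the claim for $k-1$: the distribution $\mathcal D_\Gamma^{k-1}$ is regular, $M^{k-1}$ is one of its leaves, and $L_\Gamma^{k-1}\subset\mathfrak X(M^{k-1})$ is a transitive Lie algebra of vector fields on $M^{k-1}$. We must show that $\mathcal D_\Gamma^k$, the generalised distribution on $M^{k-1}$ spanned (as a $C^\infty(M^{k-1})$-module) by the vector fields in $L_\Gamma^k$, has locally constant rank, i.e.\ is regular.

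The first step is to reduce to a purely algebraic/linear statement at a point. Fix $p\in M^{k-1}$ and let $r=\dim\,\mathcal D_\Gamma^k(p)$, the dimension of the evaluation of $L_\Gamma^k$ at $p$ inside $T_pM^{k-1}$. Since $L_\Gamma^k$ is finite-dimensional, we can pick $X_1,\dots,X_r\in L_\Gamma^k$ whose values at $p$ are linearly independent; by continuity they remain independent on a neighbourhood $U$ of $p$, so $\dim\,\mathcal D_\Gamma^k(q)\ge r$ for all $q\in U$, i.e.\ $q\mapsto\dim\,\mathcal D_\Gamma^k(q)$ is lower semicontinuous. The content of regularity is the reverse inequality: $\dim\,\mathcal D_\Gamma^k(q)\le r$ near $p$, equivalently the rank does not jump up. Here is where I would use transitivity of $L_\Gamma^{k-1}$ together with the ideal structure proved earlier: by Proposition~\ref{asolvable}-type reasoning $L_\Gamma^k$ is an ideal in $L_\Gamma^{k-1}$ (this was shown in the Proposition preceding Definition of $\mathfrak a$-solvability), so for any $Y\in L_\Gamma^{k-1}$ and $X\in L_\Gamma^k$ we have $[Y,X]\in L_\Gamma^k$. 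Now take $Y\in L_\Gamma^{k-1}$ and let $\varphi_s$ be its (local) flow on $M^{k-1}$. The key identity is that $(\varphi_s)_*$ maps $\mathcal D_\Gamma^k$ to itself: indeed $(\varphi_{-s})_*X=\exp(s\,\mathrm{ad}_Y)X$ as a convergent series landing in the finite-dimensional subspace $L_\Gamma^k$ (using that $\mathrm{ad}_Y$ preserves $L_\Gamma^k$), and evaluating, $(\varphi_{-s})_*X\big|_{\varphi_s(q)}$ spans $\mathcal D_\Gamma^k(\varphi_s(q))$ from $\mathcal D_\Gamma^k(q)$. Hence $\dim\mathcal D_\Gamma^k$ is constant along every integral curve of every vector field in $L_\Gamma^{k-1}$.

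The second step combines this flow-invariance with transitivity to conclude. Since $L_\Gamma^{k-1}$ is transitive on $M^{k-1}$, its vector fields span $T_qM^{k-1}$ at every $q$; in particular the (local) pseudogroup generated by the flows of elements of $L_\Gamma^{k-1}$ acts locally transitively on $M^{k-1}$, so any two nearby points of $M^{k-1}$ are joined by a finite concatenation of such flow segments. As $\dim\mathcal D_\Gamma^k$ is invariant under each of these flows, it is locally constant on $M^{k-1}$, which is exactly regularity of $\mathcal D_\Gamma^k\subset\mathfrak X(M^{k-1})$. Then $\mathcal D_\Gamma^k$ is (Frobenius) integrable with leaves $M^k$ of constant dimension $r_\Gamma^k=r$, and since $L_\Gamma^k$ spans $\mathcal D_\Gamma^k=TM^k$ on each leaf by construction, $L_\Gamma^k\subset\mathfrak X(M^k)$ is transitive; its solvability is inherited from $L$ because subalgebras of solvable Lie algebras are solvable, as recalled above. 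This closes the induction.

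The main obstacle I anticipate is making the flow-transport argument rigorous in the \emph{generalised} (non-constant-rank) setting and with only \emph{local} flows: one must be careful that $\exp(s\,\mathrm{ad}_Y)X$ really converges and equals $(\varphi_{-s})_*X$ on the relevant neighbourhood (this is standard for finite-dimensional $\mathrm{ad}_Y$-invariant subspaces of $\mathfrak X(M^{k-1})$ but deserves a line), and that the orbit of the pseudogroup of $L_\Gamma^{k-1}$ through $p$ is genuinely a full neighbourhood of $p$ in $M^{k-1}$ — this is precisely Hermann's/Nagano-type transitivity, guaranteed here since $L_\Gamma^{k-1}$ spans the whole tangent space, so it is not deep, but it is the crux. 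Everything else is bookkeeping with the ideal relations $[L_\Gamma^{k-1},L_\Gamma^k]\subset L_\Gamma^k$ and the already-established regularity and transitivity at level $k-1$.
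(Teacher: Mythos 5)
Your proposal is correct and follows essentially the same route as the paper: the ideal relation $[L_\Gamma^{k-1},L_\Gamma^k]\subset L_\Gamma^k$ gives invariance of $\mathcal{D}_\Gamma^k$ under the flows of $L_\Gamma^{k-1}$, which act locally transitively on $M^{k-1}$ by the induction hypothesis, forcing constant rank and hence transitivity of $L_\Gamma^k$ on its leaves. You merely make explicit the details (semicontinuity of the rank, the $\exp(s\,\mathrm{ad}_Y)$ transport on the finite-dimensional ideal, chaining of local flows) that the paper's terser proof leaves implicit.
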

\begin{proof}
We prove the proposition by induction. Assume that $\mathcal{D}_\Gamma^i$ is
transitive in $M^i$ for $i<k$, which certainly is true for $k=1$.
Now, as $L_\Gamma^{k}$ is an ideal in $L_\Gamma^{k-1}$,
we have {that $[\mathcal{D}_\Gamma^k, L_\Gamma^{k-1}]\subset\mathcal{D}_\Gamma^k$,}
which implies that $\mathcal{D}_\Gamma^k$ is invariant under the flows of the vector fields of
$L_\Gamma^{k-1}$. But {as these  act} transitively on $M^{k-1}$, {as assumed} in the
 induction hypothesis, then the rank $r_\Gamma^k$ of
$\mathcal{D}_\Gamma^k\subset\mathfrak{X}(M^{k-1})$ is constant.
Hence, the flows of $L_\Gamma^{k}$ act transitively on any of its leaves
$M^k$.
\end{proof}

The algorithm to integrate the differential equation 
induced by $\Gamma$ proceeds by iteratively applying an elementary
step that is described in the following proposition.

\begin{proposition}\label{leaves}
The  leaves $M^k$ of the foliation $\mathcal{D}_\Gamma^k\subset
\mathfrak{X}(M^{k-1})$ can be determined  by quadratures.
\end{proposition}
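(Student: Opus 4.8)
The plan is to produce, by quadratures, enough first integrals of the distribution $\mathcal{D}_\Gamma^k$ inside $M^{k-1}$ to cut out one of its leaves $M^k$. The starting observation is that by Proposition~\ref{regular} the ambient Lie algebra $L_\Gamma^{k-1}$ acts transitively on $M^{k-1}$ and $L_\Gamma^k$ is an ideal in it, hence $\mathcal{D}_\Gamma^k$ is a regular distribution of some constant rank $r_\Gamma^k$, invariant under the flows of $L_\Gamma^{k-1}$. Choose a point $p\in M^{k-1}$ and a complementary set of local coordinates transverse to the leaf through $p$; equivalently, pick vector fields $Y_1,\dots,Y_s\in L_\Gamma^{k-1}$ (with $s=\dim M^{k-1}-r_\Gamma^k$) whose values at $p$ complete a basis of $T_pM^{k-1}$ modulo $(\mathcal{D}_\Gamma^k)_p$. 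First I would use the transitivity to move off the leaf: the map obtained by composing, in a fixed order, the flows of the $Y_j$'s applied to the leaf through $p$ is a local diffeomorphism, so the "time" parameters $(t_1,\dots,t_s)$ of these flows serve as transverse coordinates labelling the leaves. The content of the proposition is that the dependence of these labels on a point of $M^{k-1}$ can be recovered by quadratures.

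The key step is to set up the differential equations that the transverse labels satisfy along curves in $M^{k-1}$ and to recognise them as instances of the classical integrable non-autonomous systems collected in Section~2. Here one uses solvability decisively: since $L$ is solvable, so is the quotient $L_\Gamma^{k-1}/L_\Gamma^k$, and on it one can exploit the Lie theorem machinery from Section~3 — pass to the complexification, use the common (generalised) eigenvector structure of the adjoint action — to arrange the structure constants governing the transverse motion into the triangular/linear-inhomogeneous shape of~(\ref{ileq}) or, for a two–dimensional step, of~(\ref{ilseq}) with a constant matrix $H$. Concretely, because $[\mathcal{D}_\Gamma^k,L_\Gamma^{k-1}]\subset\mathcal{D}_\Gamma^k$, the $Y_j$'s close among themselves modulo $\mathcal{D}_\Gamma^k$ with \emph{constant} coefficients, and the inhomogeneous terms are functions obtained by contracting known vector fields against already-constructed objects. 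Integrating $\Gamma$ itself (or the generators) along its flow then produces the transverse coordinates as in~(\ref{sileq}) or~(\ref{ssieq}): each step costs only the exponential of an elementary matrix and finitely many quadratures.

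Once the transverse coordinates $t_1,\dots,t_s$ are available as explicit functions on $M^{k-1}$ obtained through these quadratures, the leaf $M^k$ is simply their common level set $\{t_1=c_1,\dots,t_s=c_s\}$, which is exactly what "determined by quadratures" means. The main obstacle I anticipate is purely organisational rather than analytic: making precise that the flows of the transverse generators $Y_j$ can be computed by quadratures at this stage of the induction. This is not automatic — integrating a vector field is generally as hard as the original problem — so the argument must show that, \emph{relative to the leaf $M^k$ already constructed at the previous level and to the structure constants made triangular by solvability}, the needed flows reduce to the two model problems of Section~2. I would handle this by carrying along an inductive hypothesis asserting not just that $M^{k-1}$ is known but that a full adapted "rectifying" coordinate system on it is known by quadratures, so that the transverse ODEs genuinely have constant (or previously-integrated) coefficients; verifying that the solvability of $L$ propagates to each quotient $L_\Gamma^{k-1}/L_\Gamma^k$ — which follows from Corollary~\ref{asolvable} and the fact that subalgebras and quotients of solvable algebras are solvable — is what guarantees the triangular normal form at every step.
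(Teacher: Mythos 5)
There is a genuine gap here. Your argument makes the leaf-labels the flow times of transverse generators $Y_j$ and then tries to recover them by solving ODEs modelled on (\ref{ileq})--(\ref{ilseq}); but, as you yourself note, this requires computing the flows of the $Y_j$ (or of $\Gamma$) by quadratures, which is exactly the kind of integration that is \emph{not} available at this stage --- it is as hard as the original problem, and your proposed fix (a stronger inductive hypothesis that a full rectifying coordinate system on $M^{k-1}$ is already known by quadratures) is nowhere established and would make the argument circular; in the paper the rectification question is only settled later (Theorem \ref{t1}), and only for Abelian \emph{ideals}, where regularity of the distribution can be guaranteed. Moreover, you invoke solvability of $L$, complexification and triangular normal forms ``decisively'', but Proposition \ref{leaves} is stated and proved in the paper for an arbitrary transitive finite-dimensional $L$: no solvability hypothesis is in force, and none is needed.

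The missing idea is much more elementary and you pass right next to it. Since $L_\Gamma^k=[L_\Gamma^{k-1},L_\Gamma^{k-1}]+\langle\Gamma\rangle$, any two elements of $L_\Gamma^{k-1}$ bracket into $L_\Gamma^k$; in particular your $Y_j$'s close \emph{to zero} modulo $\mathcal{D}_\Gamma^k$, not merely with constant coefficients. Hence the $1$-forms $\alpha^s$ on $M^{k-1}$ defined by vanishing on $L_\Gamma^k$ and $\alpha^s(Y_r)=\delta^s_r$ are closed: evaluating $d\alpha^s(Z,W)=Z\alpha^s(W)-W\alpha^s(Z)-\alpha^s([Z,W])$ on the frame $\{X_i\}\cup\{Y_r\}$, the first two terms vanish because $\alpha^s$ takes constant values on the frame, and the last vanishes because $[Z,W]\in\mathcal{D}_\Gamma^k$. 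One quadrature per form then yields local functions $Q^s$ with $\alpha^s=\mathrm{d}Q^s$, and the leaves $M^k$ are the joint level sets of the $Q^s$. No flow of any vector field, no Lie-theorem machinery, and no appeal to the model systems of Section 2 is required for this proposition; that machinery belongs to the later main theorem, not here.
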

\begin{proof}
The proof is local. Consider a point $p\in M^{k-1}$ and  a maximal set
of vector fields which are linearly independent at $p$,
$B=\{X_1,X_2,\dots,X_{r_\Gamma^k}\}\subset L_\Gamma^{k}$.
Due to the assumed transitivity of $L_\Gamma^{k}$,
they form a local basis of
the {module} $\mathcal{D}_\Gamma^{k}$. Let us supplement $B$ with more vector
fields $\{Y_1,\dots,Y_{r_\Gamma^{k-1}-r_\Gamma^k}\}\in L_\Gamma^{k-1}$, so that their values at each point  $p\in M^{k-1}$
form a basis of $T_pM^{k-1}$.
Now, it is easy to check that the 1-forms
$\alpha^s\in\bigwedge^1(M^{k-1})$, with $s=1,\dots,r_\Gamma^{k-1}-r_\Gamma^k$,
locally defined by the conditions of vanishing on $L_\Gamma^{k}$
and  such that $\alpha^s(Y_r)=\delta^{s}_r$,
are closed.
{We can then integrate them by quadratures}  to obtain
locally defined  functions $Q^s$, so that $\alpha^s=\mathrm{d}\,Q^s$.
Finally, the leaves $M^k$ are simply the intersections of the
level sets of the $Q^s, s=1,\dots,r_\Gamma^{k-1}-r_\Gamma^k$.
\end{proof}

{\bf Remark:} Note that, while the proposition is true in general,
it is only useful when $r_\Gamma^{k-1}>r_\Gamma^k$. The case in which the two ranks
are equal is trivial.

The previous results can be combined to prove the following
\begin{theorem}
Let $L$ be a finite-dimension real transitive solvable subalgebra of vector fields on a
differentiable manifold $M$, $L\subset \mathfrak{X}(M)$, $\Gamma \in L$.
If $L$ is $\langle\Gamma\rangle$-solvable,
then the system of differential equations determining the integral curves
of $\Gamma$ is integrable by quadratures
\end{theorem}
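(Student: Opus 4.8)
The plan is to assemble the elementary reduction of Proposition~\ref{leaves} into a finite descent governed by the hypothesis of $\langle\Gamma\rangle$-solvability. First I would use that hypothesis to fix $N\in\mathbb{N}$ with $L_\Gamma^N=\langle\Gamma\rangle$, so that we have a finite decreasing chain $L=L_\Gamma^0\supseteq L_\Gamma^1\supseteq\dots\supseteq L_\Gamma^N=\langle\Gamma\rangle$ of Lie subalgebras with $L_\Gamma^{k}$ an ideal in $L_\Gamma^{k-1}$. By Proposition~\ref{regular}, choosing successively a leaf $M^{k}$ of $\mathcal{D}_\Gamma^{k}$ through a fixed base point $p$, the distribution $\mathcal{D}_\Gamma^{k}$ is regular on $M^{k-1}$ and $L_\Gamma^{k}$ acts transitively on $M^{k}$. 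Since $\Gamma\in L_\Gamma^{k}$ for every $k$, the field $\Gamma$ is tangent to every leaf $M^{k}$; hence the integral curve of $\Gamma$ issued from $p$ is contained in $M^{k}$ for each $k$, and in particular in $M^{N}$.

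The descent then proceeds as follows; the whole statement being local, we work in a neighbourhood of $p$. For $k=1,2,\dots,N$ in turn, apply Proposition~\ref{leaves}: the leaf $M^{k}$ of $\mathcal{D}_\Gamma^{k}$ inside $M^{k-1}$ through $p$ is obtained by quadratures, as a common level set of functions $Q^{s}$ produced by integrating the closed $1$-forms $\alpha^{s}$ of that proposition. By the tangency noted above, each such step reduces the integration of $\Gamma$ to the same problem on the lower-dimensional manifold $M^{k}$. Whenever two consecutive ranks $r_\Gamma^{k-1}$ and $r_\Gamma^{k}$ coincide the step is vacuous and is simply skipped (cf.\ the Remark after Proposition~\ref{leaves}); since the ranks are non-increasing and terminate at $r_\Gamma^{N}\le 1$, only finitely many genuine reductions occur.

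After the $N$-th step we are confined to $M^{N}$, on which $L_\Gamma^{N}=\langle\Gamma\rangle$ acts transitively, so $\mathcal{D}_\Gamma^{N}$ has constant rank at most $1$. If this rank is $0$, then $\Gamma$ vanishes identically on $M^{N}$ and the integral curve through $p$ is constant. If it is $1$, then $M^{N}$ is one-dimensional and $\Gamma|_{M^{N}}$ is nowhere zero; solving (by the implicit function theorem, i.e.\ inversion of functions) the equations $Q^{s}=\mathrm{const}$ for a coordinate $y$ along $M^{N}$, we get $\Gamma|_{M^{N}}=f(y)\,\partial/\partial y$ with $f$ explicitly known and non-vanishing, and the flow is recovered from the single quadrature $t=\int_{y_{0}}^{y}\mathrm{d}s/f(s)$ followed by one more inversion. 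Composing the finitely many quadratures and algebraic operations of the descent with this last one expresses the flow of $\Gamma$ near $p$, which is precisely the claimed integrability by quadratures.

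The substantive content — regularity of the distributions $\mathcal{D}_\Gamma^{k}$ on the successive leaves, and above all the closedness of the forms $\alpha^{s}$ that turns each reduction into a genuine quadrature — is already carried by Propositions~\ref{regular} and~\ref{leaves}. Granting those, the points that still need care are organisational: that $\Gamma$ truly descends to each $M^{k}$ (immediate from $\Gamma\in L_\Gamma^{k}$), that the chain $L_\Gamma^{k}$ is finite (this is exactly $\langle\Gamma\rangle$-solvability), and that the terminal stage is a bona fide one-dimensional integration. I therefore do not anticipate a substantial obstacle beyond setting up the iteration cleanly.
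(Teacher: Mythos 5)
Your proposal is correct and follows essentially the same route as the paper: the paper's own proof is exactly the iteration of Proposition~\ref{regular} and Proposition~\ref{leaves} down the chain $L_\Gamma^0\supseteq\dots\supseteq L_\Gamma^m=\langle\Gamma\rangle$, terminating in a one-dimensional leaf on which $\Gamma$ is integrated by a single quadrature. Your write-up merely supplies details the paper leaves implicit (tangency of $\Gamma$ to each leaf, skipping steps with equal ranks, the stationary-point case), so no further comparison is needed.
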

\begin{proof}
The proof is a simple iteration of the previous procedure.
We determine { by quadratures}
at every step the integral leaves of
{ $\mathcal{D}_\Gamma^k$}. If we end up with {$\mathcal{D}_\Gamma^m=\langle\Gamma\rangle$ for an index $m$},
the corresponding leaves $M^m$ are one-dimensional and the problem of finding
the integral curves of $\Gamma$ reduces to solving an autonomous
differential equation in one variable, which can always be achieved by
one quadrature.
\end{proof}

The previous results can be trivially extended to Abelian Lie subalgebras
\begin{corollary} If $A\subset L$ is an Abelian Lie subalgebra
and  $L$ is $A$-solvable, then
$(M,L,\Gamma)$ is integrable by quadratures
for any vector field $\Gamma\in A$.
\end{corollary}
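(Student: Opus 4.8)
The plan is to reduce this to the theorem just proven. Since $\Gamma\in A$ and $A$ is a subalgebra, $\langle\Gamma\rangle\subset A$, and by the monotonicity of $\mathfrak{a}$-solvability under passing to subalgebras of $\mathfrak{a}$ (recorded above: $\mathfrak{a}$-solvability implies $\mathfrak{a}'$-solvability for any $\mathfrak{a}'\subset\mathfrak{a}$), the $A$-solvability of $L$ forces $L$ to be $\langle\Gamma\rangle$-solvable. The preceding theorem then applies directly and yields integrability by quadratures of the system $\dot x=\Gamma(x)$. Strictly speaking nothing else is needed, and I expect no genuine obstacle: the only substantial input is the theorem itself.

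It is nevertheless worth recording the more economical variant that runs the algorithm with $\mathfrak{a}=A$ from the outset, since it uses fewer quadrature steps. Form the decreasing chain $L^0_A=L$, $L^{k+1}_A=[L^k_A,L^k_A]+A$, which by hypothesis stabilises at $A$ after finitely many steps, say $L^m_A=A$. The proofs of Proposition~\ref{regular} and Proposition~\ref{leaves} go through verbatim with $\langle\Gamma\rangle$ replaced by $A$: each $L^{k+1}_A$ is an ideal in $L^k_A$, so the distribution $\mathcal{D}^k_A$ has locally constant rank on the leaves $M^{k-1}$ of $\mathcal{D}^{k-1}_A$, the $1$-forms $\alpha^s$ that annihilate $L^k_A$ and are dual to a complementary family of generators in $L^{k-1}_A$ are closed, and integrating them by quadratures cuts out the next leaf $M^k$. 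After $m$ steps one is left with a leaf $M^m$ on which $A$ acts transitively.

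On $M^m$ the integration of $\Gamma$ is completed by a single further quadrature. Pick $Z_1,\dots,Z_r\in A$ with $r=\dim M^m$ that are linearly independent at a point, hence, by transitivity of $A$ on $M^m$, a local frame of $TM^m$; since $A$ is Abelian they commute, so the dual $1$-forms $\beta^i$ with $\beta^i(Z_j)=\delta^i_j$ satisfy $\mathrm{d}\beta^i=0$ and can be integrated by quadratures to coordinates $y^i$ in which $Z_i=\partial/\partial y^i$. Writing $\Gamma|_{M^m}=\sum_i f^i Z_i$ and using $[\Gamma,Z_j]=0$ (both lie in the Abelian $A$) gives $Z_j f^i=0$, so the $f^i$ are constants $c^i$ and the flow of $\Gamma$ on $M^m$ is the straight line $y^i(t)=y^i(0)+c^i t$. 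The one point that deserves a little care is precisely this last step: it is the membership $\Gamma\in A$, and not merely the tangency of $\Gamma$ to $M^m$, that makes the coefficients of $\Gamma$ constant in the flat frame, and this is where the Abelianness of $A$ is used.
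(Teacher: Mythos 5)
Your first paragraph is essentially the paper's own proof: the paper likewise uses $L^k_\Gamma\subset L^k_A$ (i.e.\ the recorded monotonicity), so $A$-solvability gives $L^m_\Gamma\subset A$, and Abelianness of $A$ then yields $L^{m+1}_\Gamma=\langle\Gamma\rangle$, i.e.\ $\langle\Gamma\rangle$-solvability, after which the preceding theorem applies. Your second ``economical'' variant is also sound (the proofs of Propositions~\ref{regular} and~\ref{leaves} indeed use only that each $L^{k+1}_A$ is an ideal in $L^{k}_A$, and your final flat-frame step on $M^m$ correctly exploits $\Gamma\in A$), but it is not needed and amounts to re-running the theorem's algorithm with $A$ in place of $\langle\Gamma\rangle$.
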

\begin{proof}
  It is enough to consider that for $\Gamma\in A$
  we have $L^k_\Gamma\subset L^k_A$, then if $L$ is $A$-solvable,
    i.e. $L^m_A=A$, then we have
    $L^m_\Gamma\subset A$ and as $A$ is Abelian
    $L^{m+1}_\Gamma=\langle\Gamma\rangle$. Therefore $L$ is
    $\langle\Gamma\rangle$-solvable and therefore can be integrated
    by quadratures.
\end{proof}

In the previous paragraphs we have proved the integrability by quadratures
of certain Abelian subalgebras of vector fields in a transitive Lie algebra.
A related question is that of {\it straightening out} or {\it rectifying}  the subalgebra
in a sense made more precise in the following definition:

\begin{definition}
We say that an Abelian subalgebra of vector fields $A$ is straightened out,
or rectified, in  an open set $U$, 
if we can find local coordinates $(Q^1,\dots,Q^n)$ in $U$ such that the set 
$\{\partial_{Q^1},\dots,\partial_{Q^r}\}\subset A$ 
and it generates the same distribution as $A$.
\end{definition}

This definition generalises the similar concept for a vector field. 
Of course, the obstruction for the existence of such coordinates
relies in the fact that $A$, in general, does not need to be of constant rank.

To better understand the situation, we will consider the following example
 \begin{example}\label{januszex}
Let $L$ be the solvable and transitive Lie algebra
of vector fields on $\mathbb{R}^2$
spanned by $\partial_x,\partial_y,x\partial_x,y\partial_y,y^2\partial_x,y\partial_x$:
$$L=\langle \partial_x,\partial_y,x\partial_x,y\partial_y,y^2\partial_x,y\partial_x\rangle\,.
$$
We take two different Abelian subalgebras $A_1=\langle \partial_x\rangle$
and $A_2=\langle y\partial_x\rangle$. Their associated descending series
are quite similar
\begin{eqnarray*}
  L^{1}_{A_1}&=&
  L^{1}_{A_2}=
  \langle\partial_x,\partial_y,y^2\partial_x,y\partial_x\rangle,\\
L^{2}_{A_1}&=&
L^{2}_{A_2}=
\langle\partial_x,y\partial_x\rangle,\\
L^{3}_{A_1}&=&A_1,\
L^{3}_{A_2}=A_2.
\end{eqnarray*}
The fact that both series {stabilize} at an Abelian subalgebra
implies the Lie integrability of the corresponding vector fields.
One may also notice that, while $A_2$ is of {non-constant} rank in
$\mathbb{R}^2$, it is indeed of constant rank in every of the leaves
of $\mathcal{D}_{A_2}^2$ that are given by $y=$const., as proposition
\ref{regular} assures.
However, while $A_1$ can be straightened out,
actually it is already rectified in coordinates $x,y$,  in the second
case  this cannot be done because $A_2$ is of {non-constant} rank.

A third interesting instance is when the Abelian subalgebra is 
$A_3=A_1+A_2$. In this case the dimension of the subalgebra is  two while its  
rank is  1, and then they  do not coincide.
Of course $A_3$ can be rectified,
in fact, according to our definition, it is straightened in $x,y$ coordinates.

\end{example}

In view of the previous example, the natural question arises whether there is
an algebraic way to distinguish between both situations.
Actually, the key difference among the subalgebras in the example
is that while $A_1$ and $A_3$  are Lie ideals of  $L$,  $A_2$ is not, in fact
$[L,A_2]=A_1+A_2$.  We will show below that this is the general case
and, indeed,  it is always possible to rectify an Abelian ideal of a solvable Lie algebra of vector fields.

Given a $n$-dimensional differentiable manifold $M$, let $L\subset
\mathfrak{X}(M)$ be a transitive finite-dimensional Lie algebra
as above.
Consider an Abelian ideal $A\vartriangleleft L$ and the descending series $L^i_A$ as defined in the previous section,
that is
\begin{equation}\label{defLGi}
L_A^0=L,\quad L_A^{i+1}=[L_A^{i},L_A^{i}]+A=L^{i+1}+A\,,\ i\in {\mathbb{N}}\,.
\end{equation}
Next we consider the generalised distribution $\mathcal{D}_A^i$ spanned by
$L_A^{i}$. It is obviously
(Frobenius) integrable, because  $L_A^{i}$ is a Lie algebra.
One can prove that it is actually a regular distribution.
\begin{proposition}\label{coordinates}
Let $A$ be an Abelian ideal of a transitive Lie algebra 
of vector fields $L$. For every $k$, the distribution $\mathcal{D}_A^k$
is a regular and involutive distribution on $M$, and then generates a foliation
$\mathcal{F}_A^k$.
In particular, if $M^k$ is one of {its} leaves,
then $L_A^k$ restricted to $M^{k}$ is a transitive
real Lie subalgebra of vector fields on $M^k$.
\end{proposition}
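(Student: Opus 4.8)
The plan is to mimic the structure of the proof of Proposition~\ref{regular}, running an induction on $k$ but now keeping track of the stronger fact that everything takes place on the full manifold $M$ (not on a leaf of the previous step) because $A$, being an ideal of $L$, makes every $L_A^k=L^{k+1}+A$ an ideal of $L$ as well, by the Proposition preceding Corollary~\ref{asolvable}. The base case $k=0$ is trivial: $\mathcal{D}_A^0=TM$ by transitivity, so it is regular, involutive, and $L$ acts transitively on the single leaf $M$. For the inductive step I would assume $\mathcal{D}_A^k$ is regular and $L_A^k$ is transitive on each leaf $M^k$, and then prove the corresponding statements for $\mathcal{D}_A^{k+1}$.

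First I would establish regularity of $\mathcal{D}_A^{k+1}$ \emph{on all of $M$}. Since $L_A^{k+1}$ is an ideal of $L_A^k$ (indeed of $L$), we have $[\mathcal{D}_A^{k+1},L_A^k]\subset\mathcal{D}_A^{k+1}$, so the distribution $\mathcal{D}_A^{k+1}$ is invariant under the flows of all vector fields in $L_A^k$; since those flows act transitively on each leaf $M^k$ of $\mathcal{D}_A^k$, the rank of $\mathcal{D}_A^{k+1}$ is constant along each such leaf. But in fact $L_A^{k+1}=L^{k+2}+A$ is an ideal of the whole $L$, so $\mathcal{D}_A^{k+1}$ is invariant under flows of \emph{all} of $L$, which acts transitively on $M$; hence the rank of $\mathcal{D}_A^{k+1}$ is globally constant on $M$. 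Involutivity is immediate because $L_A^{k+1}$ is a Lie algebra, so $\mathcal{D}_A^{k+1}$ is Frobenius-integrable and defines a foliation $\mathcal{F}_A^{k+1}$.

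Finally, for the transitivity statement on the leaves $M^{k+1}$: because $L_A^{k+1}$ spans $\mathcal{D}_A^{k+1}$ pointwise on $M$ and $\mathcal{D}_A^{k+1}$ has constant rank equal to $\dim M^{k+1}$, at each point of a leaf $M^{k+1}$ the values of the vector fields in $L_A^{k+1}$ fill the whole tangent space $T_pM^{k+1}$; moreover each such vector field is tangent to $\mathcal{D}_A^{k+1}$ and hence to the leaf, so its flow preserves $M^{k+1}$. Thus $L_A^{k+1}$ restricts to a transitive finite-dimensional real Lie subalgebra of $\mathfrak{X}(M^{k+1})$, completing the induction.

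The step I expect to require the most care is justifying that the rank of $\mathcal{D}_A^{k+1}$ is genuinely constant: one must use that $A\vartriangleleft L$ forces $L_A^{k+1}=L^{k+2}+A$ to be an ideal in the full $L$ (not merely in $L_A^k$), which is precisely what lets us conclude constancy of rank on $M$ rather than only on each leaf $M^k$ — this is the feature that distinguishes the present proposition from Proposition~\ref{regular} and is what allows the foliation $\mathcal{F}_A^{k}$ to live on $M$. The rest is a routine adaptation of the earlier argument.
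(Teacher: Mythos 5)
Your proposal is correct and takes essentially the same route as the paper: the paper's proof is exactly the observation that, $A$ being an ideal, $L_A^k=L^k+A$ is an ideal of the whole $L$, so the flows of $L$ preserve $\mathcal{D}_A^k$ and, acting transitively on $M$, force its rank to be constant on all of $M$, with involutivity and transitivity on the leaves following as you describe. Two minor remarks: your inductive scaffolding is superfluous, since the decisive step ($L$-invariance plus transitivity of $L$ on $M$) never uses the inductive hypothesis, and the correct identity is $L_A^{k+1}=L^{k+1}+A$ rather than $L^{k+2}+A$ --- an off-by-one that is harmless here because either sum is an ideal of $L$.
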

\begin{proof}
As $L_A^{k}$ is an ideal in $L$ and the latter is transitive, the compositions of local diffeomorphisms generated by flows of $L$ act transitively on $M$ and preserve the distribution $D_A^k$ which is therefore of constant rank.
\end{proof}

As distributions $\mathcal{D}_A^k$ are regular, they have constant rank, which will be denoted with $r_A^k$. Of course, $r_A^{k+1}\le r_A^k$.
A crucial step in showing our result on the integrability by quadratures 
is the following.
\begin{theorem}\label{t1}
Given an Abelian ideal $A$ in a transitive Lie algbera $L$ as before, 
for each $p\in M$, there is a neighbourhood $U$ of $p$ and a coordinate system $(Q)=(Q^1,\dots,Q^n)$ therein, which can be obtained by quadratures from any given one, such that $(Q)$  locally determines the series of foliations $\mathcal{F}_A^k$, $k=0,1,2,\dots$, i.e.
\begin{equation}\label{fol}
\text{the leaves of}\quad\mathcal{F}_A^k\cap U\quad \text{are the level sets of}\quad Q^{r_A^k+1},\dots,Q^n\,.
\end{equation}
Moreover, if $L^m_A=A$, then these coordinates can be chosen so that $\partial_{Q^1},\dots,\partial_{Q^{r_A^m}}$ belong to $A$.
\end{theorem}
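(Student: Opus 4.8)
The plan is to build the coordinate system iteratively, peeling off one foliation at a time and using the non-autonomous integrability results of Section 2 at each stage. I would proceed by downward induction on $k$, starting from the largest index. Suppose first that $L^m_A=A$ for some $m$; if not, the series $L^k_A$ still stabilises (by solvability and Corollary \ref{asolvable}, applied with $\mathfrak a = A$) at $A$ itself after finitely many steps, so the chain of ranks $r_A^0 \ge r_A^1 \ge \cdots$ is finite and eventually constant equal to $\dim A$ (as a distribution rank). Fix $p\in M$. At the innermost level, $\mathcal D_A^m$ restricted to its leaf $M^m$ is, by Proposition \ref{coordinates}, a transitive Lie algebra of vector fields on $M^m$ equal to (the restriction of) $A$, which is Abelian and of constant rank $r_A^m$; hence it is locally generated by $r_A^m$ pointwise independent commuting vector fields, which can therefore be simultaneously rectified — these give the first batch of coordinates $Q^1,\dots,Q^{r_A^m}$ with $\partial_{Q^1},\dots,\partial_{Q^{r_A^m}}\in A$. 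The issue is that rectifying commuting vector fields a priori requires integrating them, so the real content is that on $M^m$ this rectification can be done by quadratures; I would defer this point (it is presumably handled by the explicit form of the relevant flows, as in Section 2) and treat the inductive step first.

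For the inductive step, assume we have constructed, by quadratures, coordinates on a leaf $M^{k}$ of $\mathcal D_A^{k}$ adapted to the inner foliations. We now work on a leaf $M^{k-1}$ of $\mathcal D_A^{k-1}$. By Proposition \ref{leaves} (applied with the $\langle\Gamma\rangle$-series replaced by the $A$-series — the proof is identical since $L^k_A$ is an ideal in $L^{k-1}_A$), the leaves $M^k$ inside $M^{k-1}$ are cut out by quadratures: one picks $Y_1,\dots,Y_{r_A^{k-1}-r_A^k}\in L^{k-1}_A$ completing a local basis of $\mathcal D_A^k$, forms the dual closed $1$-forms $\alpha^s$ vanishing on $\mathcal D_A^k$, and integrates them to functions $Q^{r_A^k+1},\dots,Q^{r_A^{k-1}}$ (renumbering appropriately so that at the end the complement functions are $Q^{r_A^k+1},\dots,Q^n$). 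This supplies the coordinates distinguishing $\mathcal F_A^k$ from $\mathcal F_A^{k-1}$. The nontrivial part is showing the inner coordinates $Q^1,\dots,Q^{r_A^k}$, already defined on each leaf $M^k$, can be assembled into honest coordinates on $M^{k-1}$ by quadratures — i.e. that knowing them leaf-by-leaf plus the transverse coordinates $Q^s$ gives a genuine chart. Here one exploits that $L^{k-1}_A$ acts transitively on $M^{k-1}$ and that $L^k_A$ is an ideal: the flows of a transverse-completing family of vector fields in $L^{k-1}_A$ move one leaf to another, and conjugating the already-constructed coordinates by these flows spreads them out; the commutation relations (triangular modulo $\mathcal D_A^k$) make the corresponding transport equations linear and inhomogeneous with constant or leaf-dependent coefficients, solvable by the quadrature formulas \eqref{sileq}, \eqref{ssieq}.

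The main obstacle, and the place I would spend the most care, is precisely this last "assembly" or "flowing-out" step: propagating a coordinate system defined fibrewise to a neighbourhood in the total leaf $M^{k-1}$ while controlling that only quadratures (and algebraic operations, including function inversion) are used. The key is that the relevant evolution of the candidate coordinate functions along the transverse flows is governed by the adjoint action of $L^{k-1}_A$ on $L^k_A/(\text{inner part})$, which for a \emph{solvable} $L$ can be brought, over $\mathbb C$, to triangular form — this is exactly where Proposition \ref{solvid} and the Lie-theoretic triangularisation enter, reducing the transport to a tower of one-dimensional inhomogeneous linear ODEs of the type \eqref{ileq}, or, when complex eigenvalues force $2\times 2$ real blocks, to systems of the form \eqref{ilseq}; in either case the closed-form solution is a quadrature. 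Once the induction reaches $k=0$ we have the full chart $(Q^1,\dots,Q^n)$ on a neighbourhood $U$ of $p$ satisfying \eqref{fol}, and by construction — if $L^m_A=A$ — the first $r_A^m$ coordinate vector fields lie in $A$, completing the proof.
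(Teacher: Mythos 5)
Your outline diverges from the paper's argument at the two places you yourself flag as "deferred", and both are genuine gaps rather than technicalities. First, the innermost step: you propose to rectify $A$ on a leaf $M^m$ because it is "generated by $r_A^m$ pointwise independent commuting vector fields, which can therefore be simultaneously rectified", deferring the quadrature issue to "the explicit form of the relevant flows, as in Section 2". But simultaneous rectification of commuting fields is classically done by composing their flows, i.e.\ by solving exactly the ODEs whose reduction to quadratures is the whole point; and Section 2 only solves linear nonautonomous systems \emph{already written in suitable coordinates} (equations (\ref{ileq}), (\ref{ilseq})), so it cannot supply those flows. The paper never rectifies by flows: at every stage, including the last one, it picks $X_{r_A^{k+1}+1},\dots,X_{r_A^k}\in L_A^k$ complementary to $\mathcal{D}_A^{k+1}$ and integrates the dual $1$-forms that vanish on $\mathcal{D}_A^{k+1}$; these are closed precisely because $[L_A^k,L_A^k]\subset L_A^{k+1}$ (and, at the last stage, because $A$ is Abelian), so the coordinates come from quadratures alone, and the identities $X_j(Q^i)=\delta_{ij}$ together with $X_j(Q^i)=0$ on the transverse coordinates force $X_j=\partial_{Q^j}\in A$. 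That closed-coframe mechanism is the key idea your proposal is missing for the tangent batch $Q^1,\dots,Q^{r_A^m}$.

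Second, your "assembly/flowing-out" step is not a quadrature procedure: spreading leafwise-defined coordinates by conjugating with flows of transverse elements of $L_A^{k-1}$ presupposes those flows, which are not available (computing them is again the general integration problem), and the proposed triangularisation via Proposition \ref{solvid} plays no role in Theorem \ref{t1} (solvability is not even assumed in its first part; Section 2 and the linear-ODE formulas enter only in the main theorem of Section 5). The only cross-leaf issue in the paper's proof is fixing the constants of integration of the locally exact forms, and this is resolved without any ODE: since $A$ is an ideal, $\mathcal{D}_A^k$ is regular (Proposition \ref{coordinates}), so one normalises the primitives to vanish on an arbitrary smooth section of the local quotient $U\to U/(\mathcal{F}_A^k\cap U)$. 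With that synchronisation the forms integrate to functions on a full neighbourhood of $p$, giving (\ref{fol}) directly; the induction then runs from $k=0$ inwards, rather than from the innermost leaf outwards as you propose. In short, your transverse-coordinate step (essentially Proposition \ref{leaves}) is fine, but the two steps you deferred are exactly where the proof lives, and the flow-based route you sketch for them would reintroduce the integration you are trying to avoid.
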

\begin{proof}
The proof is inductive and local.

Suppose $Q^{r_A^k+1},\dots,Q^n$ are chosen so that (\ref{fol}) is satisfied and let
 $X_{r_A^{k+1}+1}, \ldots, X_{r_{k}}\in L_A^{k}$ be chosen so that they span
 a vector subspace of $\mathcal{D}_A^{k}(p)$ complementary to $\mathcal{D}_A^{k+1}(p)$, i.e.
 $$\langle X_{r_A^{k+1}+1}(p), \ldots,X_{r_A^{k}}(p)\rangle \oplus  \mathcal{D}_A^{k+1}(p)=\mathcal{D}^{k}(p).$$
{These vector fields span therefore a subdistribution of $\mathcal{D}_A^{k}$ complementary to $\mathcal{D}_A^{k+1}$ in a neighbourhood of $p$.}
Of course, one understands that if $r_A^{k+1}=r_A^k$, then we do not need to choose anything (as $\mathcal{F}_A^k=\mathcal{F}_A^{k+1}$) and we pass to the next step. On any leaf $M^k$ of the foliation $\mathcal{F}_A^k$ in a neighbourhood of $p$ there are uniquely defined 1-forms
$\alpha^{r_A^{k+1}+1}, \ldots, \alpha^{r_{k}}$ which vanish on $\mathcal{D}_A^{k+1}$ and satisfy
$$
 \alpha^i(X_j)=\delta _{ij}, \qquad i,j=r_A^{k+1}+1, \ldots,r_A^k\,.
 $$
These forms are closed, so {locally} of the form $\alpha^i=\mathrm{d} Q^i_{M^k}$ for some (local) functions $Q^i_{M^k}$ on $M^k$, which can be obtained by quadratures (by integrating $\alpha^i$'s). 
The functions are defined modulo constants, but  if  $\mathcal{F}_A^{k}$ is a regular foliation
in $M$, they could be `synchronized' along all $M^k$ by  assuming that they vanish
on an arbitrarily chosen smooth section of the local projection $M\supset U\to(U/\mathcal{F}_A^k\cap U)$.
In this way we get, by quadratures, functions $Q^{r_A^{k+1}+1},\dots,Q^{r_A^k}$, defined in a neighbourhood $U$ of $p$, 
which vanish on $\mathcal{D}^{k+1}_A$. So  their level sets,
together with  those of $Q^{r_A^k+1},\dots, Q^n$, 
determine (locally) $\mathcal{F}_A^{k+1}$.
Remark that coordinates in $U$ with the required properties exist if $A$ is an ideal of $L$
and therefore $\mathcal{D}_A^k$ is a regular distribution, but not necessarily in the more general case
of $A$ being a subalgebra. This is the crucial step in which this proof differs from that of
Proposition \ref{leaves} in which this regularity was not ensured.

Finally, the coordinates have been chosen so that
$$
X_j(Q^i)= \alpha^i(X_j)=\delta _{ij}, \qquad i,j=r_A^{k+1}+1, \ldots,r_A^k\,,
$$
which, together with (\ref{fol})  and provided  that $L_A^m=A$, 
imply $X_i=\partial_{Q^i}$ for $i=1,\dots,r_m$ 
\end{proof}

The immediate consequence of the preceding proposition is the 
following corollary, that explains example \ref{januszex}
and is crucial 
for the results in the next section.
\begin{corollary}\label{straight}
Any Abelian ideal of a transitive finite-dimensional solvable Lie 
algebra of vector fields, can be straightened out by quadratures. 
\end{corollary}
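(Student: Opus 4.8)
The plan is to reduce the statement directly to Theorem~\ref{t1} and Corollary~\ref{asolvable}, which between them already contain all the analytic content. Let $L\subset\mathfrak{X}(M)$ be a transitive finite-dimensional solvable Lie algebra of vector fields and let $A\vartriangleleft L$ be an Abelian ideal. First I would invoke Corollary~\ref{asolvable}: since $L$ is solvable and $A$ is an Abelian ideal, $L$ is $A$-solvable, i.e. there is a smallest natural number $m$ with $L_A^m=A$. Consequently $L_A^k=A$ for every $k\ge m$, so the generalised distribution $\mathcal{D}_A^m$ spanned by $L_A^m$ coincides with the distribution $\mathcal{D}_A$ spanned by $A$ itself; by Proposition~\ref{coordinates} this distribution is regular, of some constant rank $r:=r_A^m$.

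Next I would apply Theorem~\ref{t1} to this Abelian ideal $A$. Fix $p\in M$. The theorem produces a neighbourhood $U$ of $p$ and coordinates $(Q^1,\dots,Q^n)$ on $U$, obtainable by quadratures from any prescribed coordinate system, such that $\partial_{Q^1},\dots,\partial_{Q^{r}}\in A$ (this is precisely the last assertion of the theorem, which applies because $L_A^m=A$). It remains to check that these coordinates straighten $A$ in the sense of the definition, namely that $\partial_{Q^1},\dots,\partial_{Q^{r}}$ generate the same distribution as $A$. But these are coordinate vector fields, hence pointwise linearly independent throughout $U$, and they lie in $A$; therefore at every point $q\in U$ they span an $r$-dimensional subspace of $\mathcal{D}_A(q)$. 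Since $\mathcal{D}_A=\mathcal{D}_A^m$ has constant rank $r$, this subspace is all of $\mathcal{D}_A(q)$, so $\langle\partial_{Q^1},\dots,\partial_{Q^{r}}\rangle$ and $A$ span the same regular distribution on $U$. This is exactly the assertion that $A$ is straightened out in $U$, and the coordinates were obtained by quadratures, which completes the proof.

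As for the difficulty, there is essentially no obstacle left at the level of the corollary: it is a bookkeeping consequence of results already established. The genuine work was done in Theorem~\ref{t1}, whose proof crucially uses that $A$ is an \emph{ideal} (not merely a subalgebra) of $L$, so that all the distributions $\mathcal{D}_A^k$ are regular and the closed $1$-forms $\alpha^i$ can be integrated by quadratures on the leaves and then synchronised across nearby leaves. Example~\ref{januszex} shows this cannot be dispensed with: the subalgebra $A_2=\langle y\partial_x\rangle$, which is not an ideal, has non-constant rank and cannot be straightened out at all. Thus the only point worth emphasising in the present argument is the identification $\mathcal{D}_A^m=\mathcal{D}_A$ for an $m$ with $L_A^m=A$, which is what lets us conclude that the $r_A^m$ coordinate fields lying in $A$ already exhaust the distribution spanned by $A$.
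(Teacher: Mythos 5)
Your proposal is correct and follows essentially the same route as the paper: Corollary \ref{asolvable} gives $L_A^m=A$, and then the coordinates produced by quadratures in Theorem \ref{t1} (with its final clause $\partial_{Q^1},\dots,\partial_{Q^{r_A^m}}\in A$) rectify $A$. Your extra verification that these coordinate fields span the same distribution as $A$, via $\mathcal{D}_A^m=\mathcal{D}_A$ and constancy of the rank, is a small but welcome elaboration of what the paper leaves implicit.
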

\begin{proof}
As shown in Corollary \ref{asolvable}, if $A\vartriangleleft L$ is an Abelian ideal of a 
solvable Lie algebra, then $L$ is $A$-solvable, which means that 
$L_A^m=A$ for some $m$. Hence, the coordinates
determined by quadratures in  Proposition \ref{coordinates} 
rectify $A$.
\end{proof}

\section{Solvability implies integrability}

Here we will prove our main result which can be formulated as follows.
\begin{theorem}
If $L$ is a finite-dimensional solvable and transitive real Lie algebra of vector fields on a manifold $M$, then each vector field $\Gamma\in L$
is integrable by quadratures.
\end{theorem}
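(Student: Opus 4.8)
The plan is to induct on the length of the derived series of $L$, reducing the integration of $\Gamma$ to a sequence of problems each governed by an Abelian ideal, which by Corollary~\ref{straight} can be straightened out by quadratures. First I would invoke Proposition~\ref{solvid} to pick an Abelian ideal $A\vartriangleleft L$ of dimension $1$ or $2$. Since $L$ is solvable, Corollary~\ref{asolvable} tells us $L$ is $A$-solvable, so the descending series $L^k_A$ from (\ref{defLGi}) stabilises at $A$, say $L^m_A=A$. By Proposition~\ref{coordinates} each $\mathcal{D}^k_A$ is a regular involutive distribution on $M$, and Theorem~\ref{t1} produces, by quadratures from any initial chart, coordinates $(Q^1,\dots,Q^n)$ on a neighbourhood $U$ of a given point in which the foliations $\mathcal{F}^k_A$ are level sets of the last coordinates and, moreover, $\partial_{Q^1},\dots,\partial_{Q^{r_A^m}}$ belong to $A$.

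The heart of the argument is then to read off the form of $\Gamma$ in these coordinates. Because the leaves of $\mathcal{F}^k_A$ are level sets of $Q^{r_A^k+1},\dots,Q^n$ and $\Gamma\in L=L^0_A$ is tangent to nothing finer than all of $M$ but the higher $L^k_A$ are \emph{ideals}, the flow of $\Gamma$ permutes the leaves of each $\mathcal{F}^k_A$; concretely, $\Gamma$ applied to $Q^j$ for $j>r_A^k$ depends only on the coordinates $Q^{r_A^k+1},\dots,Q^n$ transverse to $\mathcal{F}^k_A$. Running this from $k=0$ up to $k=m$ shows that, ordering the coordinates by the filtration, the system $\dot Q=\Gamma(Q)$ is block-triangular: the equations for the coordinates transverse to $\mathcal{D}^1_A$ form a closed autonomous subsystem of dimension $n-r_A^1$, which is itself the integration problem for the image of $\Gamma$ in the transitive solvable quotient Lie algebra $L/\mathcal{D}^1_A$-action on $U/\mathcal{F}^1_A$ — a strictly shorter derived series, hence solved by quadratures by the inductive hypothesis. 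Feeding that solution in as known functions of $t$, the next block becomes a \emph{non-autonomous} system on the $r_A^1-r_A^m$ intermediate coordinates, again triangular over the successive quotients, each block solved by the inductive hypothesis with time-dependent inhomogeneous terms.

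Finally one reaches the innermost block: the coordinates $Q^1,\dots,Q^{r_A^m}$ along the leaves of $\mathcal{F}^m_A$, on which $A$ acts by the commuting fields $\partial_{Q^1},\dots,\partial_{Q^{r_A^m}}$. Since $A$ is an ideal, $[\Gamma,\partial_{Q^i}]\in A$, so the restriction of $\Gamma$ to such a leaf has the form $\Gamma=\sum_i f^i\,\partial_{Q^i}$ where the coefficients $f^i$, after substituting the already-computed outer coordinates as functions of $t$, are \emph{affine} functions of $Q^1,\dots,Q^{r_A^m}$ with time-dependent coefficients — precisely because $\dim A\le 2$ and $[\partial_{Q^j},\Gamma]\in A=\langle\partial_{Q^1},\dots,\partial_{Q^{r_A^m}}\rangle$ forces $\partial_{Q^j}\partial_{Q^l}f^i$ to vanish. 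This is exactly the linear inhomogeneous system (\ref{ilseq}), or (\ref{ileq}) in the one-dimensional case, whose explicit solution by two quadratures is recorded in (\ref{ssieq}) and (\ref{sileq}). Assembling the three stages gives $\Gamma$ integrated by quadratures.

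I expect the main obstacle to be the careful bookkeeping in the middle stage: verifying that the block for each quotient is genuinely a well-posed transitive-solvable integration problem of the type handled by the induction (so that one may legitimately apply the inductive hypothesis to $L$ acting on $U/\mathcal{F}^k_A$), and that passing to a non-autonomous version — with the outer coordinates frozen as functions of $t$ — does not break solvability or transitivity of the relevant algebra. The one-dimensional reductions and the final affine system are routine given Section~2; the delicate point is organising the filtration so that each reduction step strictly decreases the derived length while keeping the hypotheses of the theorem (suitably relativised to the quotient manifold) intact.
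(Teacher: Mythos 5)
Your outer block and your innermost block are essentially sound, but the middle stage contains a genuine gap --- and it is exactly the one the paper's proof is structured to avoid. After you solve the outermost block (the quotient by $\mathcal{F}_A^1$, whose projected algebra is in fact Abelian, since $[L,L]\subset L_A^1$ projects to zero) and substitute that solution, the equations for the intermediate coordinates $Q^j$ with $r_A^{k+1}<j\le r_A^k$, $1\le k\le m-1$, become a genuinely \emph{non-autonomous} system, and nothing constrains how they depend on the block's own variables: for $X_i\in L_A^k$ one only gets $X_i\bigl(\Gamma(Q^j)\bigr)=[X_i,\Gamma](Q^j)$ with $[X_i,\Gamma]\in L_A^k$, which is an arbitrary function, not a constant, so these blocks are in general nonlinear in their own coordinates with time-dependent coefficients. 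The inductive hypothesis --- whether you index it by derived length or anything else --- is a statement about an \emph{autonomous} vector field belonging to a finite-dimensional transitive solvable Lie algebra of vector fields; it does not apply to a non-autonomous system obtained by freezing the outer coordinates as functions of $t$, and there is no general quadrature method for such systems (the paper itself stresses that even linear systems with time-dependent coefficients are not integrable by quadratures in general). You flag this as the delicate point, but flagging it does not fill it: as written, the intermediate blocks cannot be dispatched.

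The paper removes the obstruction by inducting on $\dim M$ and quotienting in the opposite order: one factors out only the foliation $\mathcal{F}_A$ spanned by the Abelian ideal $A$ itself, of rank $r\le 2$. Then \emph{everything} transverse to $A$ remains a single autonomous problem of the same type: $\pi_*$ maps $L$ onto a transitive solvable algebra $\bar L$ on the leaf space $B$, of dimension $n-r<n$, so $\bar\Gamma=\pi_*\Gamma$ is integrable by quadratures by the induction hypothesis; after substituting that solution, the only remaining equations are the at most two fibre equations, and these are special precisely because $A$ is an ideal of dimension $\le 2$. If $\dim A=r$, the fields $\partial_{Q^1},\dots,\partial_{Q^r}$ exhaust $A$, so $[\Gamma,\partial_{Q^i}]=\sum_j h^j_i\partial_{Q^j}$ with \emph{constant} $h^j_i$, giving the constant-coefficient system (\ref{ilseq}); if $\dim A=2$ but the rank is $1$, the fibre equation is the single scalar equation (\ref{ileq}) with time-dependent coefficient. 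Note, incidentally, that your argument for the innermost block only yields affineness: constancy of the linear part requires $\partial_{Q^1},\dots,\partial_{Q^{r_A^m}}$ to span all of $A$, and when that fails one must observe that the rank drops to $1$ so that the scalar equation suffices --- exactly the case split the paper makes. So the repair is not bookkeeping but a different reduction: quotient by the small ideal foliation and recurse on the whole transverse problem, instead of triangularising the full flag $\mathcal{F}_A^k$ at once.
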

\begin{proof}
We shall use induction on the dimension $n$ of the manifold $M$, but as the considerations are local, we can as well assume that $M=\mathbb{R}^n$.
The case $n=0$ is trivial, so assume that $n\ge 1$ and let us
pick up an Abelian ideal $A\subset L$ of dimension one or two, whose existence is granted
for real solvable finite-dimensional Lie algebras (Proposition \ref{solvid}).
Due to the fact that $L$ is transitive, we know that the distribution
$\mathcal{D}_A$ spanned by $A$ is regular, say of rank $r\le 2$. As it is also involutive, it generates a foliation $\mathcal{F}_A$. Moreover, one can obtain by quadratures a coordinate system $Q^1,\dots,Q^n$ such that $\mathcal{D}_A$ is generated by $\pa_{Q^1},\dots,\pa_{Q^r}\in A$ and leaves of $\mathcal{F}_A$ are the level sets of the functions $Q^{r+1},\dots,Q^n$ 
(Theorem \ref{t1} and Corollary \ref{straight}).

We will first consider the case in which the dimension of the Abelian Lie algebra $A$
coincides with the dimension of the integral leaves of the foliation $\mathcal{F}_A$, i.e. $\operatorname{dim}(A)=r$. Then, we can view locally $M$ as a local principal bundle with
the structure group $A$ viewed as an Abelian Lie group acting on $M$ and with the
(local) fibration $\pi:M\rightarrow B$ of $M$ on the space $B=M/A$ of $A$-orbits. The latter can be identified as the leaves of the foliation $\mathcal{F}_A$.

By fixing a {local} section $\sigma:B\rightarrow M$ we can locally describe
the points of $M$ by pairs $(\mathbf{v},b)\in A\times B$ through the diffeomorphism
$\xi:M\rightarrow A\times B$ given by $p=\Phi_{\mathbf{v}}(1,\sigma(b))$, where
$\Phi_{\mathbf{v}}(t,p_0)$ is the integral curve of the vector field
$\mathbf{v}$ at time
$t$ with the initial condition $p_0$ at $t=0$.

Now, we can use $\xi$ to transport vector fields {from }$M$ to $A\times B${;}
for instance, an element $\mathbf{v}\in A$
may be also viewed as a vertical vector field on $M$,
$\mathbf{v}\in\mathfrak{X}(M)$, then the image of $\mathbf{v}$ in
$A\times B$, i.e. $\xi_*\mathbf{v}\in\mathfrak{X}(A\times B)$,
is the directional derivative
in the  direction $\mathbf{v}$ in the vector space $A$.

The dynamical vector field $\Gamma$ acts linearly on the ideal $A$
by the commutator, namely
$$ [\Gamma,{\mathbf{v}}]=H{\mathbf{v}},$$
{where $H$ is a linear map  $H:A\rightarrow A$.}
Moreover, $\Gamma$ is projectable by $\pi_*$ onto the base manifold $B$.
{We denote} its projection $\bar\Gamma=\pi_*\Gamma\in\mathfrak{X}(B)$.
The system of differential equations determining the integral curves of $\Gamma$ can therefore be written
{as follows:}
\begin{eqnarray}\label{sys}
\dot {\mathbf{v}} &=& H{\mathbf{v}} + {\mathbf{w}}(b)\\
\dot b &=& \bar\Gamma(b)\nonumber
\end{eqnarray}
for some function ${\mathbf{w}}:B\rightarrow A$.

The projection $\pi_*$ is a homomorphism of the Lie algebra of vector fields $L$ onto
a Lie algebra $\bar L$ of vector fields on $B$. Clearly, $\bar L$ is solvable and transitive, so by the inductive assumption, $\bar\Gamma$ (thus the equation $\dot b=\bar\Gamma(b)$)  is integrable by quadratures.
If we plug its solution $b(t)$ into the first equation, we are left with
$$\dot {\mathbf{v}} = H{\mathbf{v}} + {\mathbf{w}}(b(t))\,,$$
i.e. {an inhomogenous linear} equation with constant coefficients which, as remarked in
Section 2, can be solved by quadratures.
To see it in coordinates, we first conclude that
$[\partial_{Q^i},\Gamma]=\sum_{j=1}^r h^j_i\partial_{Q^j}$ implies that
$$\Gamma=\sum_{j=1}^r\left( \sum_{i=1}^r
h^j_iQ^i+w^j(Q^{r+1},\dots,Q^n)\right)\partial_{Q^j}+\bar\Gamma\,,
$$
where $h^j_i\in\mathbb{R}$, and $w^j$ as well as the vector field $\bar\Gamma=\sum_{s=r+1}^n\gamma^s(Q^{r+1},\dots,Q^n)\partial_{Q^s}$ depend on coordinates
$Q^{r+1},\dots,Q^n$ only. This leads to the system (\ref{sys}) which in coordinates reads
\begin{eqnarray}\label{sys1}
\dot Q^j &=& \sum_{i=1}^r h^j_iQ^i + w^j(Q^{r+1},\dots,Q^n)\,,\quad j=1,\dots,r\,,\\
\dot Q^s &=& \gamma^s(Q^{r+1},\dots,Q^n)\,,\quad s=r+1,\dots,n\,.\label{sys2}
\end{eqnarray}
Solving (\ref{sys2}) by the inductive assumption, we end up with
$$\dot Q^j=\sum_{i=1}^r h^j_iQ^i +  w^j(Q^{r+1}(t),\dots,Q^n(t))\,,\quad j=1,\dots,r\,,
$$
which can be integrated by quadratures.

Now, we should {still} consider the possibility that the dimension of $A$ is two,
but {the dimension of the   integral leaves of the foliation $\mathcal{D}_A$ is one}. In this case, we chose a one-dimensional subspace $A_1\subset A$,
whose generator $X_1$ spans $\mathcal{D}_A$. As we already know, $X_1$ can be integrated
by quadratures and can be taken as $\partial_{Q^1}$ in our system of coordinates. We again construct a fibration $\pi:M\rightarrow B$, with $B$ being the manifold of leaves of $\mathcal{F}_A$.
Projectability of $\Gamma$ is again guaranteed, {and we denote }
$\bar\Gamma\in\mathfrak{X}(B)$ the vector field $\pi_*\Gamma$.
The crucial difference from the previous case is that for ${\mathbf{v}}\in A_1$ {(recall that $A_1$ is assumed to be one-dimensional)} we have
$[\Gamma,{\mathbf{v}}]\in A$ and therefore $[\Gamma,{\mathbf{v}}]=f(b){\mathbf{v}}$, {where $f$ can now depend on coordinates $b$.}
The equations of motion read
\begin{eqnarray}
\dot {\mathbf{v}} &=& f(b){\mathbf{v}} + {\mathbf{w}}(b)\\
\dot b &=& \bar\Gamma(b)
\end{eqnarray}
for some function ${\mathbf{w}}:B\rightarrow A_1$.

{Once  again, if we} solve the second set of equations, we are left with
$$\dot {\mathbf{v}} = f(b(t)){\mathbf{v}} + {\mathbf{w}}(b(t))\,,$$
which is a linear differential equation with time-dependent coefficients.
As remarked in Section 2, due to the fact that this is a single equation in one variable,  it can be integrated by two quadratures (in contrast with the more general case
of a system of linear differential equations with time dependent coefficients,
that cannot be integrated in general).

In coordinates: $\Gamma$ must be of the form
$$\Gamma=\left(f(Q^2,\dots,Q^n)Q^1+w(Q^2,\dots,Q^n)\right)\partial_{Q^1}+\sum_{s=2}^n\gamma_s(Q^{2},
\dots,Q^n)\partial_{Q^s}\,.
$$
We can first solve
$$\dot Q^s=\gamma_s(Q^{2},\dots,Q^n)\,,\quad s=2,\dots,n\,,$$
and so reduce to
$$\dot Q^1=f(t)Q^1+w(t)$$
which also can be solved by quadratures.
\end{proof}

\begin{example}
Consider the Lie algebra of vector fields in
$\mathbb{R}^2$ spanned by
$$X_1=\partial_x,\quad X_2=y\partial_x,\quad J=xy\partial_x+(1+y^2)\partial_y.$$
The Lie algebra $L$ is isomorphic to $\mathfrak{e}(2)$ and
$A=\langle\partial_x, y\partial_x\rangle$ is its only non trivial ideal.

If we take $\Gamma=J$ as the dynamical vector field,
we immediately see that the  Lie's procedure cannot be applied, as $J$ is not
an element of any commutative ideal in $L$. Also the mentioned Kozlov's result is not applicable, since the algebra is not triangular and 
the vector fileds are not independent at every point.
To integrate by quadratures, we may use
the procedure described above. It applies as follows.

Take $A_1=\langle\partial_x\rangle$ and coordinate $y$ for the base
manifold $B$.
The equation for the coordinate $x$ in the fibre is
$$\dot x=xy,$$
while the differential equation corresponding to the
projection $\bar\Gamma$ of the dynamical vector field on $B$ is
$$\dot y=1+y^2,$$
that can be immediately integrated to give $y(t)=y_0+\tan t$.
Substituting into the equation in the fibre, we get
$$\dot x=(y_0+\tan t)x\,,$$
whose solution can be expressed as
$x(t)=x_0\exp(y_0t)/\cos t$.
\end{example}

\section*{Acknowledgments}

We are deeply indebted to M.F. Ra\~nada, coworker in the previous paper, for his essential contribution to the discussions at first stages of this one.
This work was partially  supported by the research projects MTM2015-64166-C2-1-P,  FPA2015-65745-P (MINECO/FEDER)  and DGA E24/1, E24/2 (DGA, Zaragoza). The research of
 J.~Grabowski was funded by the  Polish National Science Centre grant under the contract number DEC-2012/06/A/ST1/00256.


\end{document}